\documentclass[pdflatex,sn-basic]{sn-jnl}

\usepackage{graphicx}%
\usepackage{multirow}%
\usepackage{amsmath,amssymb,amsfonts}%
\usepackage{amsthm}%
\usepackage{mathrsfs}%
\usepackage[title]{appendix}%
\usepackage{xcolor}%
\usepackage{textcomp}%
\usepackage{manyfoot}%
\usepackage{booktabs}%
\usepackage{algorithm}%
\usepackage{algorithmicx}%
\usepackage{algpseudocode}%
\usepackage{listings}%
\usepackage{hhline}
\usepackage{mathtools}
\usepackage{siunitx}
\usepackage[capitalise,nameinlink]{cleveref}
\usepackage{hyperref}

\newcommand{\Prob}{P\,}
\newcommand{\E}{\mathrm{E}}
\newcommand{\OR}{\mathit{R}}
\newcommand{\orcidicon}[1]{\textsuperscript{\href{https://orcid.org/#1}{\includegraphics[height=1em]{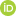}}}}
\theoremstyle{thmstyleone}%
\newtheorem{theorem}{Theorem}
\newtheorem{corollary}{Corollary}

\theoremstyle{thmstyletwo}%

\theoremstyle{thmstylethree}%

\begin{document}

\title{A bivariate cure copula model with zero-inflated gamma frailty: dependence in both cure fractions and survival times}

\author[1]{\fnm{Masaki} \sur{Hino}\orcidicon{0009-0006-7734-6335}}

\author[1,2]{\fnm{Shogo} \sur{Kato}\orcidicon{0000-0003-2648-6769}}

\author*[3]{\fnm{Takeshi} \sur{Emura}\orcidicon{0000-0002-3904-4014}}\email{takeshiemura@gmail.com}

\affil[1]{\orgdiv{Department of Statistical Science}, \orgname{The Graduate University for Advanced Studies, SOKENDAI}, \orgaddress{\street{Shonan village}, \city{Hayama}, \postcode{240-0193}, \state{Kanagawa}, \country{Japan}}}
 
\affil[2]{\orgname{Institute of Statistical Mathematics}, \orgaddress{\street{10-3 Midori-cho}, \city{Tachikawa}, \postcode{190-8562}, \state{Tokyo}, \country{Japan}}}
 
\affil*[3]{\orgdiv{School of Informatics and Data Science}, \orgname{Hiroshima University}, \orgaddress{\street{1-3-2 Kagamiyama}, \city{Higashi-Hiroshima}, \postcode{739-8511}, \state{Hiroshima}, \country{Japan}}}

\abstract{In biomedical studies, paired survival data arise naturally when two event times are observed within the same subject. Existing statistical models seldom accommodate both cure fractions and complex dependence structures. In this paper, we propose a novel bivariate cure frailty-copula model for paired survival data with a cure fraction. By incorporating a zero-inflated gamma frailty, the proposed framework simultaneously accommodates a cure fraction and continuous unobserved heterogeneity among uncured subjects. Dependence between cure statuses is modeled naturally via an odds-ratio parameter, while dependence between survival times conditional on frailty is captured through a copula. We show that the proposed model includes existing bivariate cure models as special cases. Population-level rank correlation coefficients are derived for the proposed model, namely tie-corrected versions of Kendall’s tau and Spearman’s rho. For suitable choices of marginal distributions and copula, the joint survival function admits a closed-form expression, enabling maximum likelihood estimation and likelihood ratio testing. Simulation studies and a real data application demonstrate the practical utility of the proposed approach. An \textsf{R} package, \texttt{curecopula}, implementing the proposed methods is publicly available on GitHub.}

\keywords{Bivariate survival analysis, Dependence modeling, Mixture cure model, Shared frailty-copula model, Likelihood-based inference, Tie-corrected Kendall's tau}

\maketitle

\section{Introduction} \label{sec:intro}

In survival analysis, a population may include a subset of subjects who never experience the event of interest; their survival times are effectively infinite and they are treated as cured. Cure models accommodate these long-term survivors by assigning a positive probability mass at infinity under random right censoring \citep{amicoCureModelsSurvival2018,pengCureModelsMethods2021}. Univariate cure models with zero-inflated frailty provide a natural framework and have shown practical utility. In the literature of cure-frailty models, researchers treated frailty as a discrete random variable \citep{desouzaBayesianCureRate2017,canchoBayesianCureRate2018,molinaSurvivalModelsInduced2021}, such as zero-modified power series frailty distributions. Cure models have also been extended to clustered and multivariate time-to-event data via marginal regression approaches \citep{yauLongtermSurvivorMixture2001,yuMixtureCureModels2008,niuSemiparametricMarginalMixture2013,niuMarginalRegressionAnalysis2014,schneiderFreerangingDogsLifetime2022}, where the primary focus is on marginal distributions rather than on the full specification of the joint distribution.

For bivariate survival analysis with full specification of dependence, frailty models, copula models, and their hybrids are central tools \citep{emuraSurvivalAnalysisCorrelated2019,wangLikelihoodbasedInferenceFrailtycopula2020,wangMultivariateFailureTime2021}. The frailty model, introduced by \citet{vaupelImpactHeterogeneityIndividual1979}, captures unobserved heterogeneity and can also induce dependence between event times \citep{hougaardFrailtyModelsSurvival1995,duchateauFrailtyModel2008,haStatisticalModellingSurvival2017,balanTutorialFrailtyModels2020}. Several cure-frailty formulations have been proposed for bivariate survival data \citep{wienkeBivariateFrailtyModel2003,wienkeModellingCureFraction2016,rondeauCureFrailtyModels2013,tawiahBivariateJointFrailty2020}. Copula models provide flexible dependence structures \citep{nelsenIntroductionCopulas2006,joeDependenceModelingCopulas}, and cure-copula approaches have likewise been studied \citep{chatterjeeBivariateCureMixtureApproach2001,louzadaFGMLongTermBivariate2013,deoliveiraperesBivariateLifetimeModels2020,peresBivariateDistributionsBased2021,huangBayesianBivariateCure2022}.

However, unified frameworks that simultaneously incorporate cure, frailty, and copula-based dependence remain incomplete. \citet{wienkeBivariateFrailtyModel2003} proposed a bivariate cure-frailty model based on correlated gamma frailties, which can be viewed through a copula representation \citep{wienkeFrailtyModelsSurvival2010}. In that construction, however, the implied copula cannot be chosen flexibly, as the dependence structure is entirely determined by the correlated gamma frailty specification. \citet{rouzbahaniNewBivariateSurvival2025} proposed a mixed Poisson frailty-copula model in which cure is induced implicitly through a point mass at zero in the frailty distribution, $\Prob(Z=0)>0$, where $Z$ is a discrete frailty variable and $Z=0$ indicates the cure status. As a result, the cure fraction is not introduced as a separate parameter; instead, it is recovered from the limit of the survival function as $t \to \infty$. This construction corresponds to a non-mixture cure formulation, in which the cure fractions and the frailty parameters are inherently coupled and cannot be identified or interpreted separately. More broadly, in these existing approaches, cure, frailty, and dependence are not separately parameterized, which limits flexibility. Moreover, dependence is not explicitly characterized in terms of bivariate cure status or population-level rank correlations.

To address these limitations, we propose a bivariate cure frailty-copula model with a zero-inflated gamma frailty. Unlike the correlated gamma frailty model of \citet{wienkeBivariateFrailtyModel2003}, in which the dependence structure is induced by the frailty distribution, our framework allows additional flexibility in choosing a copula for the dependence structure separately from the frailty distribution. Our model also includes existing bivariate cure models as limiting cases. Specifically, as the frailty parameter $\gamma \to 0$, the frailty component vanishes and the model reduces to the bivariate cure-copula model of \citet{chatterjeeBivariateCureMixtureApproach2001}. Under the same limit $\gamma \to 0$, if the copula is restricted to the Clayton-type extension implied by correlated gamma frailties, our model reduces to that of \citet{wienkeBivariateFrailtyModel2003}. Moreover, as the cure fractions approach zero, the proposed model reduces to the frailty-copula model of \citet{wangLikelihoodbasedInferenceFrailtycopula2020} and \citet{wangMultivariateFailureTime2021}; see \Cref{sec:model} for details.

The proposed model explicitly incorporates dependence between the marginal cure fractions via an odds ratio. \citet{kimCureRateModel2017} also modeled the dependence separately from dependence between survival times, using Pearson's correlation coefficient in the context of cure-copula models; however, that approach did not account for frailty. Furthermore, the proposed model enables derivation of population-level rank correlation coefficients in the presence of a cure fraction, which has not been addressed in the literature.

For suitable choices of parametric baseline marginal survival functions and copula, the bivariate survival function admits a closed-form expression. In particular, combining the gamma frailty distribution with Weibull baseline survival functions yields analytically tractable expressions. This enables likelihood-based inference, including maximum likelihood estimation and likelihood ratio testing. To facilitate computation and data analysis, we provide an open-source \textsf{R} package \texttt{curecopula}\footnote{We will make the package available publicly on CRAN after the paper is accepted.} implementing the proposed methods, which is used for all simulation studies and real data analyses in this paper.

This paper is organized as follows. In \Cref{sec:model}, we introduce the proposed cure frailty-copula model with zero-inflated gamma frailty. \Cref{sec:rank} derives population-level rank correlation coefficients in the presence of a cure fraction. \Cref{sec:inference} develops the likelihood function and its derivatives, enabling maximum likelihood estimation and likelihood ratio testing. \Cref{sec:software} describes the accompanying \textsf{R} package \texttt{curecopula}, which implements random number generation, estimation, and plug-in computation of population-level rank correlations for the proposed models. \Cref{sec:sim} presents simulation studies to assess the finite-sample performance of the maximum likelihood estimator and the likelihood ratio test, including Type I error control and power. \Cref{sec:realdata} applies the proposed method to the data from the Diabetic Retinopathy Study, focusing on model selection and fitted marginal survival functions. Finally, \Cref{sec:conclusion} summarizes the main findings and discusses directions for future research. Appendices provide technical derivations.

\section{Proposed model}
\label{sec:model}
In this section, we introduce a bivariate cure frailty-copula model.
We also describe the relationship of the proposed model with existing models. We finally present examples of parametric specifications.

Let $T_j$ $(j=1,2)$ denote two possibly correlated survival times for a subject, where a subject may refer to a single individual (e.g., recurrent event times or bilateral organ event times) or a pair of individuals (e.g., twins, spouses, or matched pairs). For each margin, we allow for both a cure fraction and unobserved heterogeneity.

Let $X_j\in\{0,1\}$ be a latent cure indicator for margin $j$, where $X_j=1$ indicates that the margin is cured and $X_j=0$ otherwise. The marginal cure fraction is $p_j = \Prob (X_j = 1)$ for margin $j$. Let $W$ be a positive frailty variable shared by the two margins within a subject, representing unobserved heterogeneity among uncured subjects. We assume that $(X_1,X_2)$ is independent of $W$.

To accommodate cure fractions, we adopt a zero-inflated gamma frailty distribution \citep{foxMultivariateZeroinflatedModeling2013,wangConfidenceIntervalsZeroinflated2024}. Specifically, we define the multiplicative frailty term in the marginal hazard of $T_j$ as
\[ Z_j = (1-X_j)W, \quad j=1,2. \]
This construction implies that a cured margin ($X_j=1$) has $Z_j=0$ and hence zero hazard, whereas an uncured margin ($X_j=0$) is subject to the shared frailty $W$. The probability mass of $(X_1, X_2)$ is denoted by $p_{rs} = \Prob(X_1 = r, X_2 = s)$ for $(r, s) \in \{(0,0), (0,1), (1,0), (1,1)\}$.

We assume that $W$ follows a gamma distribution with shape parameter $1/\gamma$ and scale parameter $\gamma$, so that $\E[W]=1$ and $\mathrm{Var}(W)=\gamma$. The probability density function is
\[
  f_\gamma(w) = \frac{1}{\Gamma(1/\gamma)\,\gamma^{1/\gamma}}\,w^{1/\gamma-1}
  \exp\left(-\frac{w}{\gamma}\right),
  \quad w>0,\ \gamma>0,
\]
where $\Gamma(\cdot)$ denotes the gamma function. The Laplace transform is
\[ \mathcal{L}_W(s) = \E[e^{-sW}] = (1 + \gamma s)^{-1/\gamma}, \quad s > 0, \]
which will be useful for deriving closed-form expressions for the joint and marginal distributions of $(T_1, T_2)$.

For the uncured subpopulation, residual dependence between survival times conditional on the shared frailty is modeled through a copula. Following \citet{emuraJointFrailtycopulaModel2017} (see also \citet{emuraSurvivalAnalysisCorrelated2019}), we adopt the joint frailty-copula framework in which the conditional joint survival function given the frailty $W$ is
\[
  S(t_1, t_2 \mid W) 
  = C_\theta\left(S_1(t_1 \mid W), S_2(t_2 \mid W)\right),
\]
where $C_\theta$ is a copula with dependence parameter $\theta$. Copulas provide a general and flexible framework for modeling dependence separately from the marginal distributions \citep{nelsenIntroductionCopulas2006,joeDependenceModelingCopulas,durantePrinciplesCopulaTheory2015}, and have been adopted in various fields \citep{genestCopulaModelingAbe2024}. Here, $S_j(t_j \mid w)$ is the conditional survival function for margin $j$ given $W=w$. Under the multiplicative frailty model \citep{vaupelImpactHeterogeneityIndividual1979}, we have $S_j(t_j \mid w) = S_{0j}(t_j)^w$, where $S_{0j}(t_j)$ is the baseline survival function for margin $j$.

The joint survival function is obtained by averaging the conditional joint survival over the unobservable frailties $(Z_1,Z_2)$. This leads to the following decomposition according to the four possible cure-state configurations:
\begin{align}
  S(t_1, t_2)
  &= \Prob(T_1 > t_1, T_2 > t_2) \notag \\
  &= \E_{\,Z_1, Z_2} \left[\Prob\left(T_1 > t_1, T_2 > t_2 \mid Z_1, Z_2\right)\right] \notag \\
  &= p_{11} 
   + p_{01} \int_0^\infty S_{01}(t_1)^w f_\gamma(w)\, \mathrm{d}w
   + p_{10} \int_0^\infty S_{02}(t_2)^w f_\gamma(w)\, \mathrm{d}w \notag \\
  &\quad + p_{00} \int_0^\infty C_\theta\left( S_{01}(t_1)^w, S_{02}(t_2)^w \right) f_\gamma(w)\, \mathrm{d}w,
  \label{eq:joint-surv}
\end{align}
where $S_{0j}(t_j)$ can be specified either parametrically (e.g. Weibull) or nonparametrically (e.g. splines). The first term corresponds to both margins being cured, the second and third terms to exactly one margin being cured, and the fourth term to both margins being uncured. A detailed derivation of \cref{eq:joint-surv} is provided in \hyperref[app:surv]{Appendix~\ref*{app:surv}}.

The proposed model \eqref{eq:joint-surv} includes existing models as special or limiting cases, and hence, it provides a broad class of models. In particular, when $\gamma \to 0$ (i.e., the frailty degenerates to $W=1$), the integrals collapse to evaluation at $w=1$ and \cref{eq:joint-surv} simplifies to
\[
  S(t_1,t_2)\to p_{11}+p_{01}S_{01}(t_1)+p_{10}S_{02}(t_2)+p_{00}C_\theta\left(S_{01}(t_1),S_{02}(t_2)\right).
\]
Therefore, our framework includes the cure-copula model of \citet{chatterjeeBivariateCureMixtureApproach2001} and \citet{kimCureRateModel2017} as a limiting case. The specification of $p_{rs}$ in \citet{kimCureRateModel2017} differs from ours; this distinction is discussed in \Cref{subsec:or}. Moreover, our framework also encompasses the correlated gamma frailty cure model of \citet{wienkeBivariateFrailtyModel2003} as a limiting case ($\gamma \to 0$). Specifically, if we choose the copula to be an extension of the Clayton copula,
\[
  C_{\varrho, \sigma^2}\left(u, v\right)
  = u^{1-\varrho} v^{1-\varrho} \left(u^{-\sigma^2} + v^{-\sigma^2} - 1\right)^{-\varrho/\sigma^2},
  \quad 0\leq\varrho\leq 1, \sigma^2 > 0,
\]
which reduces to the Clayton copula with parameter $\sigma^2$ when $\varrho=1$, then the resulting joint survival function in our model becomes
\[
  S(t_1, t_2) = p_{11}+p_{01}S_{01}(t_1)+p_{10}S_{02}(t_2)+p_{00}C_{\varrho, \sigma^2}\left(S_{01}(t_1),S_{02}(t_2)\right),
\]
thereby recovering their formulation. It is worth noting, however, that their model is a correlated-frailty formulation, where $\sigma^2$ denotes the frailty variance and $\varrho$ the correlation between frailties. As a result, the dependence structure is largely dictated by the frailty specification. In contrast, the shared frailty-copula formulation used here separates frailty and copula parameters and permits flexible copula choice.

Moreover, when $p_1\to 0$ and $p_2\to 0$, so that no cure fraction remains, our model simplifies to
\[
  S(t_1, t_2)
  = \int_0^\infty C_\theta\left( S_{01}(t_1)^w, S_{02}(t_2)^w \right) f_\gamma(w)\, \mathrm{d}w.
\]
Hence, the proposed framework recovers the frailty-copula model of \citet{wangLikelihoodbasedInferenceFrailtycopula2020}, and may also be viewed as the bivariate counterpart of the multivariate model studied by \citet{wangMultivariateFailureTime2021}.

To make the representation in \cref{eq:joint-surv} practically useful, explicit formulas for the three integrals are needed. In the next subsection, we present specific models that yield closed-form expressions.

\subsection{Examples for closed-form expressions}
As discussed in \Cref{sec:intro}, we adopt Weibull baseline survival functions: $S_{0j}(t_j) = \exp\left(- r_j t_j^{a_j}\right)$ for $t_j > 0$, where $a_j > 0$ is the shape parameter and $r_j > 0$ is the scale parameter \citep{kleinSurvivalAnalysisTechniques2005}. The Weibull distribution is one of the most widely used lifetime distributions in survival analysis because it accommodates increasing, decreasing, and constant hazard functions through a simple parametric form \citep{weibullStatisticalDistributionFunction1951}. It has also been adopted in frailty and copula-based survival models, making it a flexible and convenient choice for modeling heterogeneous and dependent event times \citep{liuPlanningAcceleratedLife2012,rotoloSurrosurvPackageEvaluation2018,schneiderApproachModelClustered2019}.

Under this specification, the second and third terms in \cref{eq:joint-surv} can be evaluated in closed form by exploiting the conjugate Weibull distribution for the gamma distribution \citep{hougaardAnalysisMultivariateSurvival2000,molenberghsCombinedGammaFrailty2015,wuMetaanalysisIndividualPatient2020}. The availability of a closed-form expression for the fourth term---the copula integral---depends on the choice of $C_\theta$. As shown in Examples 1--3 below, the independence, Gumbel, and FGM copulas yield closed-form expressions. The independence copula corresponds to conditional independence given $(X_1, X_2, W)$, while the Gumbel and FGM copulas introduce residual dependence beyond the shared frailty.

\subsubsection*{Example 1: The independence copula model}
\label{subsec:zig-independence}
The independence (product) copula is defined as
\[
  C(u, v) = uv.
\]
The bivariate survival function is expressed as
\begin{align*}
  S(t_1, t_2)
  &= p_{11} + p_{01} \left(1 + \gamma r_1 t_1^{a_1} \right)^{-1/\gamma} + p_{10} \left(1 + \gamma r_2 t_2^{a_2} \right)^{-1/\gamma} \\
  &\quad + p_{00} \left\{1 + \gamma \left(r_1 t_1^{a_1} + r_2 t_2^{a_2} \right) \right\}^{-1/\gamma}.
\end{align*}
Here, let $S_{j}(t_j) = \left(1 + \gamma r_j t_j^{a_j} \right)^{-1/\gamma}$ for $j = 1, 2$, which are the marginal survival functions obtained by integrating out the shared frailty $W$. These constitute proper survival functions (i.e., $\lim_{t \to \infty} S_{j}(t_j) = 0$), and the bivariate survival function can be expressed as follows:
\[
  S(t_1, t_2)
  = p_{11} + p_{01} S_1(t_1) + p_{10} S_2(t_2) + p_{00} \left\{S_1(t_1)^{-\gamma} + S_2(t_2)^{-\gamma} - 1\right\}^{-1/\gamma}.
\]
This corresponds to the model of \citet{chatterjeeBivariateCureMixtureApproach2001} when the copula is specified as the Clayton copula with parameter $\gamma$, and is also equivalent to the model of \citet{wienkeBivariateFrailtyModel2003} with $\sigma^2 = \gamma$ and $\varrho = 1$.

\subsubsection*{Example 2: The Gumbel copula model}
\label{subsec:zig-gumbel}
The Gumbel copula is defined as
\[
  C_{\theta}(u, v)
  = \exp\left[-\left\{(-\log u)^{\theta + 1} + (-\log v)^{\theta + 1}\right\}^{1/(\theta + 1)}\right],
  \quad \theta \geq 0.
\]
The bivariate survival function is expressed as
\begin{align*}
  S(t_1, t_2)
  &= p_{11} + p_{01} \left(1 + \gamma r_1 t_1^{a_1} \right)^{-1/\gamma} + p_{10} \left(1 + \gamma r_2 t_2^{a_2} \right)^{-1/\gamma} \\
  &\quad + p_{00} \left[1 + \gamma \left\{\left(r_1 t_1^{a_1}\right)^{\theta + 1} + \left(r_2 t_2^{a_2}\right)^{\theta + 1} \right\}^{1/(\theta + 1)}\right]^{-1/\gamma}.
\end{align*}
Setting $\theta = 0$ reduces to the independence copula model in Example 1.

As in Example 1, let $S_j(t_j) = \left(1 + \gamma r_j t_j^{a_j} \right)^{-1/\gamma}$ for $j = 1, 2$. The bivariate survival function can then be expressed as follows:
\begin{align*}
  S(t_1, t_2)
  &= p_{11} + p_{01} S_1(t_1) + p_{10} S_2(t_2) \\
  &\quad + p_{00} \left[1+\left\{\left(S_1(t_1)^{-\gamma} - 1 \right)^{\theta + 1} + \left(S_2(t_2)^{-\gamma} - 1 \right)^{\theta + 1}\right\}^{1/(\theta + 1)}\right]^{-1/\gamma}.
\end{align*}
This corresponds to the model of \citet{chatterjeeBivariateCureMixtureApproach2001} when the copula is specified as the BB1 copula with parameters $\theta$ and $\gamma$ \citep{joeDependenceModelingCopulas}, defined by
\[
  C_{\theta, \gamma} (u, v)
  = \left[1+\left\{\left(u^{-\gamma} - 1 \right)^{\theta + 1} + \left(v^{-\gamma} - 1 \right)^{\theta + 1}\right\}^{1/(\theta + 1)}\right]^{-1/\gamma},
  \quad \theta \geq 0, \, \gamma > 0.
\]

\subsubsection*{Example 3: The FGM copula model}
\label{subsec:zig-fgm}
The FGM copula is defined as
\[
  C_{\theta}(u, v)
  = uv \left\{1 + \theta (1 - u)(1 - v)\right\}, \quad \theta \in [-1, 1].
\]
The bivariate survival function is expressed as
\begin{align*}
  S(t_1, t_2)
  &= p_{11} + p_{01} \left(1 + \gamma r_1 t_1^{a_1} \right)^{-1/\gamma} + p_{10} \left(1 + \gamma r_2 t_2^{a_2} \right)^{-1/\gamma} \\
  &\quad + p_{00} \left[(1 + \theta)\left\{1 + \gamma\left(r_1 t_1^{a_1} + r_2 t_2^{a_2}\right)\right\}^{-1/\gamma} - \theta \left\{1 + \gamma\left(2 r_1 t_1^{a_1} + r_2 t_2^{a_2}\right)\right\}^{-1/\gamma} \right. \\
  &\qquad \left. - \theta \left\{1 + \gamma\left(r_1 t_1^{a_1} + 2 r_2 t_2^{a_2}\right)\right\}^{-1/\gamma} + \theta \left\{1 + 2 \gamma\left(r_1 t_1^{a_1} + r_2 t_2^{a_2}\right)\right\}^{-1/\gamma}\right].
\end{align*}
Setting $\theta = 0$ reduces to the independence copula model in Example 1.

As in Examples 1 and 2, let $S_{j}(t_j) = \left(1 + \gamma r_j t_j^{a_j} \right)^{-1/\gamma}$ for $j = 1, 2$. The bivariate survival function corresponds to the model of \citet{chatterjeeBivariateCureMixtureApproach2001} when the copula is chosen as
\begin{align}
  C_{\theta, \gamma} (u, v)
  &= \left(1+\theta\right) \left(u^{-\gamma} + v^{-\gamma} - 1\right)^{-1/\gamma}
  - \theta \left(2 u^{-\gamma} + v^{-\gamma} - 2\right)^{-1/\gamma} \notag \\
  &\quad - \theta \left(u^{-\gamma} + 2v^{-\gamma} - 2\right)^{-1/\gamma}
  + \theta \left(2u^{-\gamma} + 2v^{-\gamma} - 3\right)^{-1/\gamma},
  \label{eq:one-gen-fgm}
\end{align}
where $\theta \in [-1,1]$ and $\gamma > 0$. The functional form of $C_{\theta, \gamma}$ was proposed by \citet{cookGeneralizedBurrParetoLogisticDistributions1986} (not explicitly in terms of copulas), and it was later considered as a copula in \citet{wangMultivariateFailureTime2021}. This copula belongs to the generalized FGM family and allows a wider attainable range of dependence (e.g., in terms of Kendall's tau) than the classical FGM copula (see Example 3 in \Cref{subsec:kendall}).

\subsection{Dependence structure of cure indicators}\label{subsec:or}
Following \citet{chatterjeeBivariateCureMixtureApproach2001}, we quantify the dependence between the cure indicators
$(X_1,X_2)$ via an odds ratio, denoted by $\OR$. In contrast, \citet{kimCureRateModel2017} used Pearson's correlation coefficient rather than the odds ratio. However, for binary variables with covariate-dependent marginal probabilities, Pearson's correlation coefficient is generally inappropriate because its feasible range depends on the marginals and hence on the covariates \citep{raochagantyEfficiencyGeneralizedEstimating2004}. By contrast, \citet{agrestiIntroductionCategoricalData2007} noted that the odds ratio does not suffer from this limitation and provides a more suitable parametrization of association.

Let the marginal cure fractions be $p_j=\Prob(X_j=1)$ for $j=1,2$, and assume $p_j\in(0,1)$. We define the joint probabilities $p_{rs}=\Prob(X_1=r,X_2=s)$ for $r,s\in\{0,1\}$, as displayed in the $2\times2$ contingency table shown in \Cref{tab:joint_cure}. Here, $p_{rs}$ denotes the cell probability, while $p_1$ and $p_2$ denote the corresponding marginal probabilities.

\begin{table}[htbp]
  \caption{Joint distribution of $(X_1,X_2)$ with row and column margins}
  \label{tab:joint_cure}
  \renewcommand{\arraystretch}{1.5}
  \begin{tabular}{cccc}
    & $X_2=1$ & $X_2=0$ & \\ \hhline{~|--|~}
    $X_1=1$ & \multicolumn{1}{|c}{$p_{11}=\Prob(X_1=1,X_2=1)$} & \multicolumn{1}{|c|}{$p_{10}=\Prob(X_1=1,X_2=0)$} & $p_1$ \\
    \hhline{~|--|~}
    $X_1=0$ & \multicolumn{1}{|c}{$p_{01}=\Prob(X_1=0,X_2=1)$} & \multicolumn{1}{|c|}{$p_{00}=\Prob(X_1=0,X_2=0)$} & $1-p_1$ \\ \hhline{~|--|~}
    & $p_2$ & $1-p_2$ & $1$ \\
  \end{tabular}
\end{table}

The odds ratio is defined by
\[
  \OR=\frac{p_{11}p_{00}}{p_{10}p_{01}}
  =\frac{\Prob(X_1=1,X_2=1)\Prob(X_1=0,X_2=0)}
  {\Prob(X_1=1,X_2=0)\Prob(X_1=0,X_2=1)}.
\]
We assume $p_{rs}>0$ for all $r,s\in\{0,1\}$ so that $\OR\in(0,\infty)$ is well-defined. We also consider the limiting case $\OR=\infty$, which can arise when $p_{10}\to0$ or $p_{01}\to0$. Throughout this paper, the notation $\OR=\infty$ refers to the joint limit $p_{10}\to0$ and $p_{01}\to0$, under which $X_1=X_2$ almost surely.

If $\OR=1$, then $X_1$ and $X_2$ are independent, and hence
$p_{11}=p_1p_2$, $p_{10}=p_1(1-p_2)$, $p_{01}=(1-p_1)p_2$, and $p_{00}=(1-p_1)(1-p_2)$.
If $0<\OR<1$ or $\OR>1$, then $X_1$ and $X_2$ are dependent. In this case, we reparametrize
$(p_{11},p_{10},p_{01},p_{00})$ in terms of $(p_1,p_2,\OR)$ as
\begin{align*}
  p_{11}&=\frac{(\OR-1)(p_1+p_2)+1-\sqrt{D}}{2(\OR-1)}, \\
  p_{10}&=p_1-p_{11},\quad
  p_{01}=p_2-p_{11},\quad
  p_{00}=1-p_1-p_2+p_{11},
\end{align*}
where $D=\{(\OR-1)(p_1+p_2)+1\}^2-4\OR(\OR-1)p_1p_2$. See \hyperref[app:OR-deriv]{Appendix~\ref*{app:OR-deriv}} for the derivation of $p_{11}$ (including the sign choice) and its continuity at $\OR=1$. Thus, $\OR>1$ indicates positive association and $0<\OR<1$ indicates negative association, while $\OR=1$ corresponds to independence. In our sense, $\OR=\infty$ corresponds to perfect concordance.

\subsection{Incorporating covariates}
\label{subsec:marginal-cure-cov}
To incorporate covariate effects, we allow the cure probabilities to depend on observed covariates. Let $\boldsymbol{x}_{ij}$ denote a vector of covariates for subject $i$ and margin $j$. Following \citet{mclachlanRoleFiniteMixture1994}, we model the conditional cure probability as
\begin{equation*}
  \Prob(X_{ij}=1 \mid \boldsymbol{x}_{ij})
  = p_j(\boldsymbol{x}_{ij})
  = \frac{\exp\left(\boldsymbol{x}_{ij}^\top\,\boldsymbol{\beta}_j\right)}
  {1+\exp\left(\boldsymbol{x}_{ij}^\top\,\boldsymbol{\beta}_j\right)}.
\end{equation*}

In this formulation, $p_j(\boldsymbol{x}_{ij})$ is a subject-specific conditional cure probability. The population-level cure fraction for margin $j$ is $\E\left[p_j(\boldsymbol{x}_{ij})\right]$, which can be estimated by the sample average in practice. In \Cref{subsec:or}, the joint probabilities were expressed in terms of $p_1$, $p_2$, and $\OR$. Here we use the same expressions, replacing $p_1$ and $p_2$ with $p_1\left(\boldsymbol{x}_{i1}\right)$ and $p_2\left(\boldsymbol{x}_{i2}\right)$, and denote the resulting subject-specific joint probabilities by $p_{rs,i}$ $(r,s\in\{0,1\})$.

\section{Rank correlation coefficients}
\label{sec:rank}
For the proposed model to be useful for practical applications, it should provide a summary measure of dependence. In this section, we derive rank correlation coefficients---Kendall’s tau and Spearman’s rho---for the proposed model. We further show that, for suitable choices of copulas, Kendall’s tau admits closed-form expressions. Explicit formulas are provided for the independence and Gumbel copulas.

Throughout, $\tau_b$ denotes tie-adjusted Kendall’s tau \citep{adlerModificationKendallsTau1957,urbanoTreatmentTiesAP2017} and $\rho_b$ denotes tie-adjusted Spearman’s rho. These adjusted measures account for the probability mass at infinity in survival times. For the Gumbel copula model, we also confirm that the theoretical values of these coefficients closely match empirical estimates computed from simulated bivariate samples; see Online Resource 1.

Dependence between $T_1$ and $T_2$ is induced jointly by the shared frailty $W$, the copula $C_\theta$ for the uncured subpopulation, and the dependence between the cure indicators $(X_1,X_2)$. Accordingly, dependence measures for $(T_1,T_2)$ depend not only on the frailty and copula parameters but also on $(p_1,p_2,\OR)$ through the joint probabilities $p_{rs}$. This applies in particular to the tie-adjusted coefficients $\tau_b$ and $\rho_b$ introduced below. Specifically, the model can attain perfect positive dependence under suitable conditions, while unconditional independence and perfect negative dependence arise as limiting cases when $\gamma \to 0$; see \Cref{subsec:kendall,subsec:spearman}.

\subsection{Kendall's tau}
\label{subsec:kendall}
\begin{theorem}
    \label{thm:kendall}
    Under the proposed model defined in \cref{eq:joint-surv}, the tie-adjusted Kendall's tau is
    \begin{equation*}
        \tau_b = \frac{2(p_{11}p_{00} - p_{01}p_{10})}{\sqrt{(1-p_1^2)(1-p_2^2)}} + \frac{p_{00}^2}{\sqrt{(1-p_1^2)(1-p_2^2)}} \tau_{00} (\theta,\gamma),
    \end{equation*}
    where $\tau_{00} (\theta,\gamma)$ is Kendall's tau among the uncured subpopulation,
    \begin{align*}
      \tau_{00} (\theta,\gamma)
      &= 4 \int_0^1 \int_0^1 C_{\theta,\gamma}^\ast (u, v) \, \mathrm{d} C_{\theta,\gamma}^\ast (u, v) - 1, \\
      C_{\theta,\gamma}^\ast(u, v)
      &= \int_{0}^{\infty} C_\theta \left(\exp\left\{-w \frac{u^{-\gamma} - 1}{\gamma}\right\}, \exp\left\{-w \frac{v^{-\gamma} - 1}{\gamma}\right\}\right) f_\gamma(w) \, \mathrm{d} w.
    \end{align*}
\end{theorem}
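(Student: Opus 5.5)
The plan is to compute $\tau_b$ directly from its probabilistic definition, conditioning on the latent cure statuses of two independent copies. Let $(T_1,T_2)$ and $(T_1',T_2')$ be i.i.d.\ draws from \eqref{eq:joint-surv}, with latent variables $(X_1,X_2,W)$ and $(X_1',X_2',W')$, and with the convention $T_j=\infty$ when $X_j=1$. I will use the tie-adjusted definition
\[
\tau_b=\frac{\Prob(\text{concordant})-\Prob(\text{discordant})}{\sqrt{\{1-\Prob(T_1=T_1')\}\{1-\Prob(T_2=T_2')\}}},
\]
which is standard \citep{adlerModificationKendallsTau1957}.

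\textbf{Denominator.} Given $X_j=0$, the time $T_j$ has the continuous marginal survival function $\int_0^\infty S_{0j}(t)^w f_\gamma(w)\,\mathrm{d}w$. Hence a tie $T_j=T_j'$ can occur (with positive probability) only when both copies are cured in margin $j$. This gives $\Prob(T_j=T_j')=p_j^2$, and therefore the denominator equals $\sqrt{(1-p_1^2)(1-p_2^2)}$.

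\textbf{Numerator.} I will split over the $16$ ordered pairs of cure configurations $(X_1,X_2;X_1',X_2')$ and compute $\Prob(\text{conc})-\Prob(\text{disc})$ within each. Each pair has probability $p_{rs}p_{r's'}$.

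\emph{Pairs with a tie.} Whenever a margin is cured in both copies, that margin is tied and the pair contributes nothing. This covers every pair involving $(1,1)$ except those with $(0,0)$, every pair $(1,0)$ vs $(1,0)$ or $(1,0)$ vs $(1,1)$, and so on.

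\emph{Pairs $(1,1)$ vs $(0,0)$.} In either order both coordinates of the $(1,1)$ copy are $\infty$ and both coordinates of the other are finite, so the pair is surely concordant. The contribution is $2p_{11}p_{00}$.

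\emph{Pairs $(1,0)$ vs $(0,1)$.} The pair is surely discordant, contributing $-2p_{10}p_{01}$.

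\emph{Pairs $(1,0)$ vs $(0,0)$, and symmetrically $(0,1)$ vs $(0,0)$.} The first coordinate is ordered deterministically. The second coordinates are two independent draws from the same continuous uncured marginal of $T_2$, because $W$ and $W'$ are i.i.d.\ and independent of the cure indicators. So $\Prob(T_2>T_2')=1/2$, and concordance and discordance cancel, giving a net contribution of zero. This cancellation is the step I expect to need the most care. It relies on the independence of $(X_1,X_2)$ and $W$, and on the fact that the uncured marginal of $T_2$ is the same whether or not margin $1$ is cured.

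\emph{Pairs $(0,0)$ vs $(0,0)$.} Both copies come from the uncured joint law $S_{00}(t_1,t_2)=\int_0^\infty C_\theta(S_{01}(t_1)^w,S_{02}(t_2)^w)f_\gamma(w)\,\mathrm{d}w$, which has continuous margins. Their net contribution is $p_{00}^2\,\tau_{00}$, where $\tau_{00}$ is the ordinary Kendall's tau of $S_{00}$.

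Summing these contributions gives the stated numerator $2(p_{11}p_{00}-p_{01}p_{10})+p_{00}^2\tau_{00}$.

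\textbf{Identifying $\tau_{00}$.} It remains to express $\tau_{00}$ through the copula $C^\ast_{\theta,\gamma}$. Writing $\Lambda_j=-\log S_{0j}(t_j)$, the gamma Laplace transform gives the uncured marginal survival $S_j=(1+\gamma\Lambda_j)^{-1/\gamma}$, for any baseline $S_{0j}$. Inverting, $\Lambda_j=(S_j^{-\gamma}-1)/\gamma$, so $S_{0j}^w=\exp\{-w(S_j^{-\gamma}-1)/\gamma\}$. Therefore $S_{00}(t_1,t_2)=C^\ast_{\theta,\gamma}(S_1(t_1),S_2(t_2))$, i.e.\ $C^\ast_{\theta,\gamma}$ is the survival copula of the uncured pair.

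Kendall's tau depends only on the copula and is unchanged when passing to the survival copula. Hence $\tau_{00}=4\int\!\!\int C^\ast\,\mathrm{d}C^\ast-1$, the standard formula from \citet{nelsenIntroductionCopulas2006}, and this does not depend on the baselines. Combining the numerator and denominator yields the theorem.
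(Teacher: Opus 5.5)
Your proposal is correct and follows essentially the same route as the paper's proof. The paper also splits pairs by cure region, obtains the untied numerator $2(p_{11}p_{00}-p_{01}p_{10})+p_{00}^2\tau_{00}$, divides by $\sqrt{(1-p_1^2)(1-p_2^2)}$ (obtained as the $n\to\infty$ limit of the sample tie correction rather than from $\Prob(T_j=T_j')=p_j^2$), and identifies $C^\ast_{\theta,\gamma}$ through the gamma Laplace transform. Your explicit remark that the uncured marginal of $T_2$ is the same whether or not margin~1 is cured is slightly more careful than the paper's justification of the B--D cancellation, which cites only continuity.
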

\begin{proof}
    See \hyperref[app:proof-kendall]{Appendix~\ref*{app:proof-kendall}}.
\end{proof}
Since $C_{\theta,\gamma}^\ast$ depends only on $C_\theta$ and $\gamma$, the tie-adjusted Kendall's tau $\tau_b$ does not depend on the choice of baseline marginal survival functions $S_{0j}(t_j)$.

\subsubsection{\texorpdfstring{Properties of $\tau_b$}{Properties of taub}}
\label{subsec:property-tau}
The tie-adjusted Kendall's tau $\tau_b$ has a desirable property: it can attain the entire range $(-1,1]$. 

\begin{corollary}[Attainable range of $\tau_b$]
    \label{cor:tau-range}
    The tie-adjusted Kendall's tau $\tau_b$ can take any value in $(-1,1]$.
\end{corollary}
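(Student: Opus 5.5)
Starting from the formula of \Cref{thm:kendall}, write $N=\sqrt{(1-p_1^2)(1-p_2^2)}$ and
\[
  \tau_b = \frac{2(p_{11}p_{00}-p_{01}p_{10}) + p_{00}^2\,\tau_{00}(\theta,\gamma)}{N}.
\]
The proof has two parts: an upper bound $\tau_b\le 1$ that is attained, and a construction showing that every value in $(-1,1)$ is reached. For both parts I will fix the copula to be one whose uncured Kendall's tau $\tau_{00}(\theta,\gamma)$ moves continuously over an interval. The natural choice is the Gumbel model of Example 2, since by the BB1 representation $C^\ast_{\theta,\gamma}$ is the BB1 copula, with
\[
  \tau_{00}=1-\frac{2}{(\theta+1)(\gamma+2)}.
\]
This quantity is continuous, tends to $1$ as $\theta\to\infty$, and tends to $0$ as $\gamma\to 0$ with $\theta=0$. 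For negative values I will instead use the FGM model of Example 3, where $\tau_{00}$ is continuous in $(\theta,\gamma)$ and approaches $-2/9$ as $\gamma\to0$, $\theta=-1$. Negative $\tau_{00}$ is not essential, though, as explained next.

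\textbf{Upper bound and attainment of $1$.} Since $\tau_{00}\le 1$ (it is a Kendall's tau of a genuine copula $C^\ast$), the numerator is at most $2p_{11}p_{00}-2p_{01}p_{10}+p_{00}^2$. The task is to show this is at most $N$. In the concordant case $p_{10}=p_{01}=0$ (so $\OR=\infty$ and $p_1=p_2=p$), the bound reads $2p(1-p)+(1-p)^2 = 1-p^2 = N$. So $\tau_b=1$ exactly when $\OR=\infty$ and $\tau_{00}=1$, which is achieved by letting $\theta\to\infty$ in the Gumbel family (or by taking the comonotone copula directly if it is allowed as $C_\theta$). For the general inequality, I would view $\tau_b$ as the Kendall's tau-b of the actual distribution of $(T_1,T_2)$ with ties at $\infty$. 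Tau-b is bounded by $1$ by Cauchy--Schwarz applied to the sign kernels $\mathrm{sgn}(T_1-T_1')$ and $\mathrm{sgn}(T_2-T_2')$, whose second moments are exactly $1-p_1^2$ and $1-p_2^2$. This gives the bound without any algebra.

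\textbf{Values in $(-1,1)$.} Take $\gamma\to0$ and independence within the uncured group, so that $\tau_{00}\to0$ and $\tau_b\approx 2(p_{11}p_{00}-p_{01}p_{10})/N$. Now choose the cure table to be nearly perfectly discordant: $p_{11}, p_{00}\to 0$, with $p_{10}=p$ and $p_{01}=1-p$, so that $p_1=p$ and $p_2=1-p$. Then $N=\sqrt{(1-p^2)(1-(1-p)^2)}$. At $p=1/2$ this gives $-2(1/4)/(3/4)=-2/3$, not $-1$, because ties at infinity are absent there. To reach $-1$, push $p_1\to 1$ and $p_2\to 0$ with $p_{01}\to 0$. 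Writing $p_{10}=1-\epsilon$ and $p_{01}=\epsilon$, the numerator is about $-2\epsilon$ and $N\approx\sqrt{2\epsilon\cdot 2\epsilon}=2\epsilon$, so $\tau_b\to -1$. These limits must keep all $p_{rs}>0$, which is possible since $\OR\to0$ is allowed.

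\textbf{Filling the interval and the main obstacle.} The parameter space (with $p_{rs}>0$, $\gamma>0$, $\theta$) is connected, and $\tau_b$ is continuous on it. The image therefore contains the interval between values arbitrarily close to $-1$ and values equal or arbitrarily close to $1$. Together with the attained value $1$ from the $\OR=\infty$ boundary, this gives all of $(-1,1]$. The main obstacle is bookkeeping the boundary cases: $-1$ is not attained (it would need $p_1=1$, $p_2=0$, excluded since $p_j\in(0,1)$), while $1$ is attained only in the limiting sense $\OR=\infty$ together with $\tau_{00}=1$. I would state explicitly that these limiting members are included in the model, as the paper's convention on $\OR=\infty$ indicates. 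Continuity of $\tau_{00}$ in $(\theta,\gamma)$ follows from dominated convergence in the integral defining $C^\ast$.
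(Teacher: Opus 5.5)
There is a genuine gap, and it is at the negative end. Your upper-end reasoning is fine. The Cauchy--Schwarz bound uses $\E[\mathrm{sgn}^2]=1-p_j^2$. In the concordant configuration $p_{10}=p_{01}=0$ with $\tau_{00}=1$, $\tau_b=(1-p^2)/(1-p^2)=1$. This is a valid alternative to the paper's choice of $p_{00}=1$ with the Fr\'echet--Hoeffding upper bound copula, for which $C^\ast_{\theta,\gamma}$ stays the upper bound.

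The construction meant to approach $-1$ fails. With $p_{10}=1-\epsilon$ and $p_{01}=\epsilon$ you have $p_1\approx 1-\epsilon$, but $p_2\approx\epsilon$. So $1-p_2^2\approx 1$, not $2\epsilon$. Hence $N\approx\sqrt{2\epsilon}$ and $\tau_b\approx -2\epsilon/\sqrt{2\epsilon}=-\sqrt{2\epsilon}\to 0$, not $-1$: almost every pair is tied in $T_1$ but not in $T_2$, so tau-b collapses. This is not a fixable slip. With the Gumbel model ($\tau_{00}\ge 0$) or the FGM model ($\tau_{00}\ge -2/9$), $\tau_b$ never falls below $-2/3$, which is the bound recorded in Examples 1--3. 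So your remark that negative $\tau_{00}$ is ``not essential'' is false, and your connectedness argument never reaches $(-1,-2/3)$.

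The equality case of your own Cauchy--Schwarz argument shows what is actually required. The value $\tau_b=-1$ needs $\mathrm{sgn}(T_{k2},T_{l2})=-\mathrm{sgn}(T_{k1},T_{l1})$ almost surely, so the tie events in the two margins must coincide. Ties arise only from joint cure in a margin, so this gives $p_1^2=p_{11}^2=p_2^2$, i.e.\ $p_{10}=p_{01}=0$. The A--D pairs are always concordant, so they must also vanish, which forces $p_{11}=0$. Then $p_{00}=1$ and $\tau_b=\tau_{00}=-1$, and values near $-1$ require being near this configuration. Hence the negative end must come from the within-uncured copula, not from the cure indicators. This is exactly the paper's route: let $\gamma\to 0$ so that $C^\ast_{\theta,\gamma}\to C_\theta$ (a shared gamma frailty with fixed $\gamma>0$ pushes dependence upward). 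Take $p_{00}=1$ so that $\tau_b=\tau_{00}$, and use a comprehensive family such as Frank, whose Kendall's tau fills $(-1,1)$. Once such a family is in the model, $\tau_b$ sweeps $(-1,1)$ directly, and your continuity argument becomes unnecessary.
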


\begin{proof}
It suffices to show that $\tau_b$ can attain any value in $(-1,1]$ under a suitable submodel of the proposed framework. When $\gamma \to 0$, the frailty effect vanishes and the proposed model \eqref{eq:joint-surv} reduces to the corresponding cure-copula model. In the special case $p_{00}=1$, we have $\tau_b=\tau_{00}$. Under this reduced model, we may choose a comprehensive ordered bivariate copula family, such as the Frank copula family. Since such a family can attain any Kendall's tau value in $(-1,1)$, it follows that $\tau_b$ can take any value in $(-1,1)$. Moreover, for any $\gamma > 0$, if the Fr\'echet--Hoeffding upper bound copula is used as $C_\theta$, the induced copula $C_{\theta,\gamma}^\ast$ remains the upper bound copula, so that $\tau_{00}=1$ and hence $\tau_b=1$. Therefore, $\tau_b$ can take any value in $(-1,1]$.
\end{proof}

\subsubsection*{Example 1: Independence copula}
For the independence copula model, define $u=(1+\gamma r_1 t_1^{a_1})^{-1/\gamma}$, $v=(1+\gamma r_2 t_2^{a_2})^{-1/\gamma}$, which are the marginal survival functions for each margin conditional on $X_j = 0$ (after integrating out the shared gamma frailty $W$). The last term of the bivariate survival function takes the form of the Clayton copula, whose Kendall’s tau is given by $\tau_{00} = \gamma / (\gamma + 2)$. Thus,
\[ 
    \tau_b
    = \frac{2(p_{11}p_{00} - p_{01}p_{10})}{\sqrt{(1-p_1^2)(1-p_2^2)}}
    + \frac{p_{00}^2}{\sqrt{(1-p_1^2)(1-p_2^2)}}
    \cdot \frac{\gamma}{\gamma + 2}.
\]
Therefore, the attainable range of $\tau_b$ in this model is
\[
    -\frac{2}{3} \le \tau_b \le 1.
\]
The lower bound is attained when $p_{00}=0$ and $p_{10}=p_{01}=1/2$. In this case, there is no uncured pair, and the dependence is entirely determined by the cure-indicator component. The upper bound is attained when $p_{00}=1$ and $\gamma\to\infty$, because then $\tau_{00}=\gamma/(\gamma+2)\to 1$.

\subsubsection*{Example 2: Gumbel copula}
For the Gumbel copula model, let $u$ and $v$ be defined as above. The last term of the bivariate survival function takes the form of the BB1 copula, whose Kendall’s tau is given by $\tau_{00} = 1 - 2/\{(\theta + 1)(\gamma + 2)\}$. Thus,
\[
    \tau_b
    = \frac{2(p_{11}p_{00} - p_{01}p_{10})}{\sqrt{(1-p_1^2)(1-p_2^2)}}
    + \frac{p_{00}^2}{\sqrt{(1-p_1^2)(1-p_2^2)}} \cdot
    \left(1 - \frac{2}{(\theta + 1)(\gamma + 2)}\right).
\]
As in Example 1, the attainable range of $\tau_b$ in this model is $-2/3 \leq \tau_b \leq 1$. The upper bound is approached when $p_{00}=1$ and either $\theta\to\infty$ or $\gamma\to\infty$, because then $\tau_{00} \to 1$.

\subsubsection*{Example 3: FGM copula}
For the FGM copula model, let $u$ and $v$ be defined as above. The last term of the bivariate survival function takes the form of one of the generalized FGM copulas defined in \cref{eq:one-gen-fgm}, whose Kendall’s tau $\tau_{00}$ does not admit a closed-form expression in elementary functions. Nevertheless, its attainable range is known:
\[
  -\frac{2}{9} \leq \tau_{00} \leq 1.
\]
More precisely, $C_{\theta,\gamma}$ reduces to the FGM copula $C_\theta$ as $\gamma \to 0$, so the lower bound $-2/9$ is attained in the limit $\theta=-1$ and $\gamma\to 0$. Similarly, $C_{\theta,\gamma}$ reduces to the Clayton copula $C_\gamma$ when $\theta = 0$, so the upper bound $1$ is attained in the limit $\theta=0$ and $\gamma\to\infty$.

Thus, the attainable range of $\tau_b$ in this model is $-2/3 \leq \tau_b \leq 1$. As in Examples 1 and 2, the lower bound is attained when $p_{00}=0$ and $p_{10}=p_{01}=1/2$. The upper bound is attained in the limiting case where $p_{00}=1$, $\theta=0$, and $\gamma\to\infty$, under which $\tau_{00}\to 1$.

\subsection{Spearman's rho} \label{subsec:spearman}

\begin{theorem} \label{thm:spearman}
    Under the proposed model defined in \cref{eq:joint-surv}, the tie-adjusted Spearman's rho is
    \[
        \rho_b = \frac{3(p_{11}p_{00} - p_{01}p_{10})}{\sqrt{(1-p_1^3)(1-p_2^3)}} + \frac{p_{00}(1-p_1)(1-p_2)}{\sqrt{(1-p_1^3)(1-p_2^3)}} \rho_{00} (\theta, \gamma),
    \]
    where $\rho_{00} (\theta, \gamma)$ is Spearman's rho among the uncured subpopulation,
    \begin{align*}
      \rho_{00} (\theta, \gamma)
      &= 12 \int_{0}^{1} \int_{0}^{1} C_{\theta,\gamma}^\ast(u,v) \, \mathrm{d} u \, \mathrm{d} v - 3, \\
      C_{\theta,\gamma}^\ast(u, v)
      &= \int_{0}^{\infty} C_\theta \left(\exp\left\{-w \frac{u^{-\gamma} - 1}{\gamma}\right\}, \exp\left\{-w \frac{v^{-\gamma} - 1}{\gamma}\right\}\right) f_\gamma(w) \, \mathrm{d} w.
    \end{align*}
\end{theorem}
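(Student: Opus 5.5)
The plan is to take $\rho_b$ to be the population Pearson correlation of the mid-distribution (mid-rank) transforms $G_j = \{F_j(T_j^-)+F_j(T_j)\}/2$, $j=1,2$, where $F_j$ is the marginal distribution function of $T_j$ on $(0,\infty]$. This is the population analogue of Spearman's rho with averaged ranks for ties. The only atom of $T_j$ is the cure mass $p_j$ at $\infty$, so the known formula for the variance of a mid-distribution transform, $\mathrm{Var}(G_j)=\{1-\sum(\text{atom masses})^3\}/12$, gives $\mathrm{Var}(G_j)=(1-p_j^3)/12$. This accounts for the denominator, and the proof then reduces to computing $\mathrm{Cov}(G_1,G_2)$.

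\textbf{Step 1: describe $G_j$ given the cure indicator.} On $X_j=1$ we have $T_j=\infty$, so $G_j = 1-p_j/2$, which I denote $a_j$.

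On $X_j=0$, integrating out $W$ gives the proper survival function $S_j(t)=\mathcal{L}_W(\Lambda_{0j}(t))$ with $\Lambda_{0j}=-\log S_{0j}$. Hence $F_j(t)=(1-p_j)\{1-S_j(t)\}$ for finite $t$, and $G_j=(1-p_j)V_j$ with $V_j=1-S_j(T_j)$ uniform on $(0,1)$.

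Crucially, the law of $T_1$ given $X_1=0$ is the same whether $X_2=0$ or $X_2=1$. This holds because $(X_1,X_2)$ is independent of $W$ and, in \cref{eq:joint-surv}, the marginal of the copula term reproduces $S_{01}(t_1)^w$. So $E[V_1\mid X_1=0,X_2=s]=1/2$ for both values of $s$.

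\textbf{Step 2: identify the copula of the uncured pair.} In cell $(0,0)$, the joint survival function of $(T_1,T_2)$ is $\int C_\theta(S_{01}^w,S_{02}^w)f_\gamma\,\mathrm{d}w$. Inverting $u=S_1(t_1)=(1+\gamma\Lambda_{01}(t_1))^{-1/\gamma}$ gives $\Lambda_{01}=(u^{-\gamma}-1)/\gamma$. Therefore the survival copula of $(T_1,T_2)$ given $X_1=X_2=0$ is exactly $C^\ast_{\theta,\gamma}$, with margins $(S_1(T_1),S_2(T_2))$.

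Spearman's rho is invariant under the reflection $V_j=1-S_j(T_j)$. Hence $E[V_1V_2\mid X=(0,0)]=(\rho_{00}+3)/12$, where $\rho_{00}=12\iint C^\ast\,\mathrm{d}u\,\mathrm{d}v-3$.

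\textbf{Step 3: assemble and simplify.} Conditioning on the four cells,
\[
E[G_1G_2]=p_{11}a_1a_2+p_{10}a_1\tfrac{1-p_2}{2}+p_{01}a_2\tfrac{1-p_1}{2}+p_{00}(1-p_1)(1-p_2)\tfrac{\rho_{00}+3}{12},
\]
and $E[G_j]=1/2$. Splitting off the $\rho_{00}$ part, the claim is that the remainder satisfies
\[
p_{11}a_1a_2+\tfrac{p_{10}a_1(1-p_2)+p_{01}a_2(1-p_1)}{2}+\tfrac{p_{00}(1-p_1)(1-p_2)}{4}-\tfrac14=\tfrac{p_{11}p_{00}-p_{01}p_{10}}{4}.
\]
I will verify this by substituting $p_1=p_{11}+p_{10}$, $p_2=p_{11}+p_{01}$ and $p_{00}=1-p_{11}-p_{10}-p_{01}$, then expanding.

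Dividing the covariance by $\sqrt{(1-p_1^3)(1-p_2^3)}/12$ then gives the coefficient $3$ on the first term and $p_{00}(1-p_1)(1-p_2)$ on $\rho_{00}$, as stated.

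\textbf{Main obstacle.} The main obstacle is this final polynomial identity, which is routine but error-prone. A convenient check is the independent case $p_{rs}=p_rp_s$, where both sides vanish. A second, smaller point is justifying that the uncured-component marginals do not depend on the other margin's cure status, since Step 1 relies on it.
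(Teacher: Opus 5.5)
Your proof is correct, but it is organized differently from the paper's. The paper follows Pimentel et al.: it takes three independent subjects $k,l,m$ and sets $\rho_a=3\{\Prob(\mathcal{C}_\rho)-\Prob(\mathcal{D}_\rho)\}$. It then tabulates all $16$ finite/infinite patterns of $(T_{k1},T_{k2},T_{l1},T_{m2})$, and takes the denominator $\sqrt{(1-p_1^3)(1-p_2^3)}$ as the large-sample limit of the sample tie-correction terms.

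You instead define $\rho_b$ as the correlation of mid-distribution transforms. The denominator then comes intrinsically from $\mathrm{Var}(G_j)=(1-p_j^3)/12$, and only the four cure cells need examining. The two definitions agree. Because the paper defines $\rho_b$ through the sample statistic, you should state the bridge explicitly: $\E[\operatorname{sign}(T_{k1},T_{l1})\mid T_{k1}]=2G_1-1$, hence $\Prob(\mathcal{C}_\rho)-\Prob(\mathcal{D}_\rho)=4\,\mathrm{Cov}(G_1,G_2)$. So your $12\,\mathrm{Cov}(G_1,G_2)$ is exactly the paper's $\rho_a$.

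Your route gives a self-contained normalization, and $|\rho_b|\le 1$ comes for free. Your explicit check that the law of $T_1$ given $X_1=0$ does not depend on $X_2$ also surfaces a fact the paper uses only implicitly, when it assigns probability $1/2$ to orderings of finite times drawn from different cells. The paper's route, for its part, parallels its Kendall's tau proof and gives each nonzero term a concordance interpretation.

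The identity you flag as the main obstacle is quick if you center at $1/2$. $G_1-\tfrac12$ has conditional mean $(1-p_1)/2$ on $\{X_1=1\}$ and $-p_1/2$ on $\{X_1=0\}$. Hence the non-$\rho_{00}$ part of $\mathrm{Cov}(G_1,G_2)$ is $\tfrac14\{p_{11}(1-p_1)(1-p_2)-p_{10}(1-p_1)p_2-p_{01}p_1(1-p_2)+p_{00}p_1p_2\}$. This matches the paper's sum of table entries term for term. It reduces to $(p_{11}p_{00}-p_{10}p_{01})/4$ after writing $p_1=p_{11}+p_{10}$ and $p_2=p_{11}+p_{01}$ and using $\sum_{r,s}p_{rs}=1$.
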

\begin{proof}
    See \hyperref[app:proof-spearman]{Appendix~\ref*{app:proof-spearman}}.
\end{proof}
The tie-adjusted Spearman's rho $\rho_b$ also does not depend on the choice of baseline marginal survival functions $S_{0j}(t_j)$.

\subsubsection{\texorpdfstring{Properties of $\rho_b$}{Properties of rhob}}
\label{subsec:property-rho}
The tie-adjusted Spearman's rho $\rho_b$ also has a desirable property: it can attain the entire range $(-1,1]$.

\begin{corollary}[Attainable range of $\rho_b$]
    \label{cor:rho-range}
    The tie-adjusted Spearman's rho $\rho_b$ can take any value in $(-1,1]$.
\end{corollary}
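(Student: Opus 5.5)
The plan mirrors the argument used for \Cref{cor:tau-range}: it suffices to show that every value in $(-1,1]$ is attained by some submodel of \eqref{eq:joint-surv}, using the closed form of $\rho_b$ in \Cref{thm:spearman}. Consider the degenerate cure configuration $p_{00}=1$ (equivalently the limit $p_1,p_2\to 0$, with $\OR$ arbitrary). Then $p_{11}=p_{10}=p_{01}=0$, the first term of $\rho_b$ vanishes, and the denominator $\sqrt{(1-p_1^3)(1-p_2^3)}$ equals $1$. Since $p_{00}(1-p_1)(1-p_2)=1$, \Cref{thm:spearman} gives
\[
  \rho_b=\rho_{00}(\theta,\gamma).
\]
It therefore remains only to control Spearman's rho of the induced copula $C^\ast_{\theta,\gamma}$.

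\textbf{The open interval $(-1,1)$.} Let $\gamma\to 0$. Since $(u^{-\gamma}-1)/\gamma\to-\log u$ and $f_\gamma$ concentrates at $w=1$, the integrand tends to $C_\theta(u,v)$ pointwise. Copulas are uniformly bounded and Lipschitz, so dominated convergence yields $C^\ast_{\theta,\gamma}\to C_\theta$ uniformly, and hence $\rho_{00}(\theta,\gamma)\to\rho_{C_\theta}$. Taking $C_\theta$ to be the Frank family, whose Spearman's rho is continuous and strictly increasing in $\theta$ with range $(-1,1)$, every value in $(-1,1)$ is realized. This holds either by interpreting the $\gamma\to 0$ reduced cure-copula model as part of the framework, as was done for \Cref{cor:tau-range}, or by a continuity/intermediate-value argument for small fixed $\gamma>0$.

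\textbf{The endpoint $1$.} Take $C_\theta=M(u,v)=\min(u,v)$ with any $\gamma>0$. The map $u\mapsto\exp\{-w(u^{-\gamma}-1)/\gamma\}$ is increasing in $u$ for each $w$, so
\[
  M\bigl(\exp\{-w(u^{-\gamma}-1)/\gamma\},\exp\{-w(v^{-\gamma}-1)/\gamma\}\bigr)
\]
equals the term with argument $\min(u,v)$. Integrating against $f_\gamma$ and using the Laplace transform $\mathcal{L}_W$ recovers $\min(u,v)$. Hence $C^\ast_{\theta,\gamma}=M$, so $\rho_{00}=12\int\!\!\int M\,\mathrm{d}u\,\mathrm{d}v-3=12\cdot\tfrac13-3=1$, and therefore $\rho_b=1$.

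\textbf{Main obstacle.} The delicate step is justifying the limiting cases as legitimate members of the model. The case $p_{00}=1$ violates the standing assumption $p_{rs}>0$, and $\gamma\to0$ is a limit. If one insists on interior parameters, I would argue by continuity: $\rho_b$ is continuous in $(p_1,p_2,\OR,\theta,\gamma)$. As $p_1,p_2\to0$ it tends to $\rho_{00}$, and as $\gamma\to0$, $\rho_{00}$ tends to the Frank rho. Since the achieved set contains values arbitrarily close to $-1$ and to $1$, the intermediate value theorem along a connected path of interior parameters yields every value in $(-1,1)$. The value $1$ is attained exactly at the boundary configuration $p_{00}=1$ with $C_\theta=M$, consistent with the treatment in \Cref{cor:tau-range}.
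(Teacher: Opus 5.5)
Your proposal is correct and follows the paper's own route. The paper proves this corollary by analogy with \Cref{cor:tau-range}: it reduces to $\rho_b=\rho_{00}$ at $p_{00}=1$, uses the $\gamma\to 0$ cure-copula limit with the Frank family to cover $(-1,1)$, and notes that the Fr\'echet--Hoeffding upper bound $M$ is preserved under the gamma-frailty mixing, so that $\rho_{00}=1$. Your Laplace-transform check that $C^\ast_{\theta,\gamma}=M$ and your continuity/intermediate-value argument for interior parameters are sound refinements of details the paper leaves implicit.
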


\begin{proof}
The proof is analogous to that of \Cref{cor:tau-range}.
\end{proof}

For the independence copula model we have $\rho_{00} = \rho_\mathrm{Clayton}$, and for the Gumbel copula model we have $\rho_{00} = \rho_\mathrm{BB1}$. Unlike Kendall's tau, however, neither $\rho_\mathrm{Clayton}$ nor $\rho_\mathrm{BB1}$ admits a closed-form expression in elementary functions. Consequently, these quantities must be evaluated numerically.

\section{Likelihood-based inference}
\label{sec:inference}
This section develops likelihood-based inference for the proposed model with bivariate censored data, including maximum likelihood estimation and the likelihood ratio test for $H_0: \OR=1$. Let $\boldsymbol{\Psi} = (\theta, \gamma, \boldsymbol{\beta}_1, \boldsymbol{\beta}_2, \OR, \boldsymbol{\phi}_1, \boldsymbol{\phi}_2)$ denote the vector of parameters to be estimated, where $\boldsymbol{\phi}_j$ contains the parameters of the baseline marginal survival function. As shown in \Cref{subsec:marginal-cure-cov}, the marginal cure fraction $p_j(\boldsymbol{x}_{ij})$ depends on covariates $\boldsymbol{x}_{ij}$ through parameters $\boldsymbol{\beta}_j$. Furthermore, as shown in \Cref{subsec:or}, the joint probabilities $p_{rs,i}$ $(r,s \in \{0,1\})$ can be expressed as functions of $p_1(\boldsymbol{x}_{i1})$, $p_2(\boldsymbol{x}_{i2})$, and $\OR$, and thus become subject-specific through the covariates. For simplicity, when covariates are not incorporated, the parameter vector reduces to $\boldsymbol{\Psi}^\ast = (\theta, \gamma, p_1, p_2, \OR, \boldsymbol{\phi}_1, \boldsymbol{\phi}_2)$ and the joint probabilities $p_{rs}$ are common across subjects.

We consider inference under several special cases of the association parameter $\OR$. For $0<\OR<1$ and $\OR>1$, we use the full parameter vector $\boldsymbol{\Psi}=(\theta,\gamma,\boldsymbol{\beta}_1,\boldsymbol{\beta}_2,\OR,\boldsymbol{\phi}_1,\boldsymbol{\phi}_2)$, whose dimension is $3+\dim(\boldsymbol{\beta}_1)+\dim(\boldsymbol{\beta}_2)+\dim(\boldsymbol{\phi}_1)+\dim(\boldsymbol{\phi}_2)$. When $\OR=1$, the joint cure-status probabilities reduce to the independence case and the model no longer depends on $\OR$; accordingly, the parameter vector becomes $(\theta,\gamma,\boldsymbol{\beta}_1,\boldsymbol{\beta}_2,\boldsymbol{\phi}_1,\boldsymbol{\phi}_2)$. In the limiting case $\OR=\infty$, which in this paper corresponds to $X_1=X_2$ almost surely, the joint cure status is fully determined. Assuming a common regression parameter $\boldsymbol{\beta}$ for the marginal cure fractions, the effective parameter vector is $(\theta,\gamma,\boldsymbol{\beta},\boldsymbol{\phi}_1,\boldsymbol{\phi}_2)$, with dimension $2+\dim(\boldsymbol{\beta})+\dim(\boldsymbol{\phi}_1)+\dim(\boldsymbol{\phi}_2)$.

Suppose we have $n$ independent observations $\{t_{i1}, t_{i2}, \delta_{i1}, \delta_{i2}, \boldsymbol{x}_{i1}, \boldsymbol{x}_{i2}\}_{i=1}^n$, where $t_{ij} = \min(T_{ij}, C_{ij})$ is the observed time, $C_{ij}$ is the censoring time, and $\delta_{ij} = I(T_{ij} \leq C_{ij})$ is the event indicator for subject $i$ and margin $j$. We assume that $(T_{i1}, T_{i2})$ and $(C_{i1}, C_{i2})$ are independent given $\boldsymbol{x}_{i1}$ and $\boldsymbol{x}_{i2}$. The likelihood function can be expressed as
\begin{align*}
    L(\boldsymbol{\Psi})
    &= \prod_{i=1}^{n} \left\{- \frac{\partial S(t_{i1}, t_{i2})}{\partial t_{i1}}\right\}^{\delta_{i1}(1 - \delta_{i2})}
    \times \left\{- \frac{\partial S(t_{i1}, t_{i2})}{\partial t_{i2}}\right\}^{(1 - \delta_{i1})\delta_{i2}} \\
    &\quad \times \left\{\frac{\partial^2 S(t_{i1}, t_{i2})}{\partial t_{i1} \partial t_{i2}}\right\}^{\delta_{i1} \delta_{i2}}
    \times \left\{S(t_{i1}, t_{i2})\right\}^{(1 - \delta_{i1})(1 - \delta_{i2})}.
\end{align*}
Therefore, the log-likelihood function is given by
\begin{align*}
    \ell(\boldsymbol{\Psi})
    &= \sum_{i=1}^{n} \delta_{i1}(1 - \delta_{i2}) \log\left\{- \frac{\partial S(t_{i1}, t_{i2})}{\partial t_{i1}}\right\}
    + (1 - \delta_{i1})\delta_{i2} \log\left\{- \frac{\partial S(t_{i1}, t_{i2})}{\partial t_{i2}}\right\} \\
    &\quad + \delta_{i1} \delta_{i2} \log\left\{\frac{\partial^2 S(t_{i1}, t_{i2})}{\partial t_{i1} \partial t_{i2}}\right\}
    + (1 - \delta_{i1})(1 - \delta_{i2}) \log\left\{S(t_{i1}, t_{i2})\right\}.
\end{align*}
We now present the log-likelihood for the Gumbel copula case. The corresponding expressions for the independence and FGM copulas are provided in Online Resource 1.

\subsection{Maximum likelihood estimation} \label{subsec:mle}
Let $\ell(\boldsymbol{\Psi})$ denote the log-likelihood derived above. The maximum likelihood estimator is defined as $\widehat{\boldsymbol{\Psi}}=\mathrm{argmax}_{\boldsymbol{\Psi}} \ell(\boldsymbol{\Psi})$. In practice, $\widehat{\boldsymbol{\Psi}}$ is obtained by direct numerical maximization of $\ell(\boldsymbol{\Psi})$. Alternatively, one may treat the cure indicators $(X_1,X_2)$ as missing data and maximize the resulting complete-data likelihood via the EM algorithm; see Online Resource 1 for details.

By the asymptotic normality of the maximum likelihood estimator \citep{TheoryPointEstimation1998}, $\widehat{\boldsymbol{\Psi}}$ is approximately multivariate normal with mean $\boldsymbol{\Psi}_0$ and covariance matrix $\mathcal{I}_n(\boldsymbol{\Psi}_0)^{-1}$, where $\mathcal{I}_n(\boldsymbol{\Psi}_0)$ denotes the Fisher information matrix. In computation, we approximate $\mathcal{I}_n(\boldsymbol{\Psi}_0)$ by the observed Fisher information matrix $\mathcal{J}(\widehat{\boldsymbol{\Psi}})=-\partial^2 \ell(\widehat{\boldsymbol{\Psi}}) / \partial \boldsymbol{\Psi} \partial \boldsymbol{\Psi}^\top$, so that $\widehat{\mathrm{Var}}(\widehat{\boldsymbol{\Psi}}) \approx \mathcal{J}(\widehat{\boldsymbol{\Psi}})^{-1}$.

\subsubsection*{Example: The Gumbel copula model}
\label{subsec:loglik}
We illustrate the maximum likelihood estimation procedure for the Gumbel copula model introduced in \Cref{subsec:zig-gumbel}. We adopt Weibull baseline marginal survival functions, so that $\boldsymbol{\phi}_j = (a_j, r_j)$. We present the log-likelihood for the case $\OR \neq 1$, with the full parameter vector $\boldsymbol{\Psi} = (\theta, \gamma, \boldsymbol{\beta}_1, \boldsymbol{\beta}_2, \OR, a_1, r_1, a_2, r_2)$. To simplify notation, let $d_j = \sum_{i=1}^{n} \delta_{ij}$ denote the number of observed events for margin $j$, and $d_{12} = \sum_{i=1}^{n} \delta_{i1} \delta_{i2}$ the number of subjects with observed events on both margins. We further define
\begin{align*}
    A_i
    &\coloneqq A_i(\theta, \gamma, a_1, r_1, a_2, r_2)
    = 1 + \gamma \left\{\left(r_1 t_{i1}^{a_1}\right)^{\theta + 1} + \left(r_2 t_{i2}^{a_2}\right)^{\theta + 1} \right\}^{1/(\theta + 1)}, \\
    B_{i1}
    &\coloneqq B_{i1}(\boldsymbol{\Psi}) \\
    &= \log \left\{p_{01, i} \left(1+\gamma r_1 t_{i1}^{a_1}\right)^{-(1/\gamma+1)}
    + p_{00, i} \gamma^\theta r_1^{\theta} t_{i1}^{a_1 \theta} A_i^{-(1/\gamma+1)}\left(A_i-1\right)^{-\theta}\right\}, \\
    B_{i2}
    &\coloneqq B_{i2}(\boldsymbol{\Psi}) \\
    &= \log \left\{p_{10, i} \left(1+\gamma r_2 t_{i2}^{a_2}\right)^{-(1/\gamma+1)}
    + p_{00, i} \gamma^\theta r_2^{\theta} t_{i2}^{a_2 \theta} A_i^{-(1/\gamma+1)}\left(A_i-1\right)^{-\theta}\right\}.
\end{align*}
Then, the log-likelihood function is expressed as
\begin{align*}
    \ell(\boldsymbol{\Psi})
    &= (2\theta + 1) d_{12} \log \gamma
    + \sum_{j=1}^{2} \left\{d_j \log a_j + (d_j + \theta d_{12}) \log r_j\right\} \\
    &\quad 
    + \sum_{i=1}^n \delta_{i1}\left(a_1 - 1 + \delta_{i2}a_1 \theta\right) \log t_{i1}
    + \sum_{i=1}^n \delta_{i2}\left(a_2 - 1 + \delta_{i1}a_2 \theta\right) \log t_{i2} \\
    &\quad + \sum_{i=1}^n \delta_{i1} (1-\delta_{i2}) B_{i1}
    + \sum_{i=1}^n (1-\delta_{i1}) \delta_{i2} B_{i2}
    + \sum_{i=1}^n (1-\delta_{i1})(1-\delta_{i2}) \log S(t_{i1}, t_{i2}) \\
    &\quad + \sum_{i=1}^n \delta_{i1}\delta_{i2}
    \log\left\{ \left(1 + \frac{1}{\gamma} + \theta\right)A_i - \left(1 + \frac{1}{\gamma}\right) \right\} \\
    &\quad + \sum_{i=1}^n \delta_{i1}\delta_{i2} \left\{
    \log p_{00, i} - \left(2 + \frac{1}{\gamma}\right)\log A_i
    - \left(2\theta + 1\right)\log\left(A_i - 1\right)\right\}.
\end{align*}
When $\OR=1$, the expression remains unchanged except that $p_{rs,i}$ is replaced by the independence probabilities, e.g., $p_{11,i}=p_1(\boldsymbol{x}_{i1})p_2(\boldsymbol{x}_{i2})$ (and similarly for the other $p_{rs,i}$). In the limiting case $\OR=\infty$, the model simplifies further through the induced constraints on $p_{rs,i}$ and the use of a common regression parameter $\boldsymbol{\beta}$; we omit the explicit log-likelihood for brevity.

The cure-status odds ratio $\OR$ requires special treatment because the likelihood takes different forms depending on whether $\OR=1$, $\OR=\infty$, or $\OR\neq 1$. Accordingly, we consider three maximization problems: (i) maximize $\ell(\boldsymbol{\Psi})$ under the constraint $\OR=1$, (ii) maximize $\ell(\boldsymbol{\Psi})$ under $\OR=\infty$, and (iii) maximize $\ell(\boldsymbol{\Psi})$ over $\OR\neq 1$. In case (iii), $\OR=1$ is excluded from the parameter space, so we maximize the likelihood separately over $\OR\in(0,1)$ and $\OR\in(1,\infty)$. We then compare the maximized log-likelihoods from cases (i)--(iii) and select the overall maximizer.

When $\OR$ is fixed at $\OR=1$ or $\OR=\infty$, it is not estimated and no confidence interval (CI) is reported. When $\OR$ is estimated, we construct a Wald-type CI on an unconstrained scale using a region-specific transformation,
\[
g(\OR)=
\begin{cases}
\log\left(\dfrac{\OR}{1-\OR}\right), & 0<\OR<1, \\
\log\left(\OR-1\right), & \OR>1,
\end{cases}
\]
and back-transform the endpoints using $g^{-1}$. Since $\theta$, $\gamma$, $a_j$, and $r_j$ are strictly positive, we similarly construct Wald-type CIs on the log scale and back-transform. Specifically, for a scalar parameter $\psi$ with support $(0,\infty)$, we use the log transform and obtain an approximate $100(1-\alpha)\%$ CI as
\[
    \left[\exp\left\{\log(\widehat{\psi})-z_{\alpha/2}\mathrm{SE}\left(\log(\widehat{\psi})\right)\right\},\,\exp\left\{\log(\widehat{\psi})+z_{\alpha/2}\mathrm{SE}\left(\log(\widehat{\psi})\right)\right\}\right],
\]
where $z_{\alpha/2}$ is the upper $\alpha/2$ quantile of the standard normal distribution and $\mathrm{SE}\left(\log(\widehat{\psi})\right)$ can be obtained via the delta method.

\subsection{\texorpdfstring{Likelihood ratio test for $H_0:\OR=1$}{Likelihood ratio test for H0: R=1}}
\label{subsec:lrt}
We test the independence hypothesis $H_0:\OR=1$ against $H_1:\OR\neq 1$ using the likelihood ratio statistic
\[
\lambda
=2\left\{\sup_{\boldsymbol{\Psi}\in\Theta}\ell(\boldsymbol{\Psi})
-\sup_{\boldsymbol{\Psi}\in\Theta_0}\ell(\boldsymbol{\Psi})\right\},
\]
where $\Theta$ denotes the parameter space of the unrestricted model (with $\OR>0$), and $\Theta_0$ denotes the parameter space under $H_0:\OR=1$. In computation, the maximization over $\Theta$ is carried out by maximizing the corresponding log-likelihood separately over the four regimes $\OR\in(0,1)$, $\OR=1$, $\OR\in(1,\infty)$, and $\OR=\infty$ (the likelihood is continuous at $\OR=1$ but is expressed differently across regimes), and then taking the largest maximized value. Under standard regularity conditions, $\lambda \overset{\mathrm{d}}{\to} \chi^2_1$ under $H_0$, and we reject $H_0$ at significance level $\alpha$ if $\lambda>\chi^2_{1,\,\alpha}$, where $\chi^2_{1,\,\alpha}$ denotes the upper $\alpha$ quantile of the chi-squared distribution with one degree of freedom. The finite-sample type I error and power of the proposed likelihood ratio test are investigated via simulation in \Cref{subsec:sim-lrt}.

\section{Software implementation} \label{sec:software}

The proposed methodology is implemented in an \textsf{R} package \texttt{curecopula}, publicly available on GitHub. The package enables random number generation from the proposed cure frailty-copula models based on the independence, Gumbel, and FGM copulas, with or without covariates. Censored data can also be generated. In addition, the population-level rank correlation coefficients derived in this study, such as Kendall's tau and Spearman's rho, can be computed by plug-in estimation using fitted parameter values (see \Cref{sec:rank} for details).

The package also performs maximum likelihood estimation for each model and returns parameter estimates together with inferential quantities such as standard errors based on the observed information matrix. It also reports the maximized log-likelihood value, enabling likelihood-based inference. For example, likelihood ratio tests for assessing the independence of cure indicators (i.e., $\OR=1$) can be conducted by comparing the fitted models under the null and alternative hypotheses.

The validity of the implementation and the finite-sample performance of the proposed estimators are examined through simulation studies in \Cref{sec:sim}. The package is also used for the real data analysis in \Cref{sec:realdata}.

\section{Simulation study} \label{sec:sim}

We conduct a simulation study to evaluate the finite-sample performance of the proposed methods. Specifically, we investigate both maximum likelihood estimation (MLE) and the likelihood ratio test (LRT) for $H_0:\OR=1$. Estimation results for the model without covariates are reported in Online Resource 1. All simulations are conducted in \textsf{R} using the package \texttt{curecopula}.

\subsection{Simulation design} \label{subsec:sim-design} 
We generate synthetic datasets under the proposed model with covariates. Although the model allows for covariates of arbitrary dimension, we consider a single covariate for each margin for illustration. For each subject $i$, the cure fractions are modeled as
\begin{equation*}
    p_{1}(x_{i1}) = \frac{\exp\left(\beta_{01} + \beta_1 x_{i1}\right)}{1 + \exp\left(\beta_{01} + \beta_1 x_{i1}\right)},
    \qquad
    p_{2}(x_{i2}) = \frac{\exp\left(\beta_{02} + \beta_2 x_{i2}\right)}{1 + \exp\left(\beta_{02} + \beta_2 x_{i2}\right)},
\end{equation*}
where $x_{i1}$ and $x_{i2}$ are covariates.

The algorithm to generate pairs of observed survival times and censoring indicators $\{t_{i1}, t_{i2}, \delta_{i1}, \delta_{i2}\}_{i=1}^n$ proceeds as follows:
\begin{enumerate}
  \item Draw covariates $x_{i1},x_{i2}\overset{\text{i.i.d.}}{\sim}\mathrm{Uniform}(0,1)$ for $i=1,\ldots,n$.
  \item Compute subject-specific cure fractions $p_1(x_{i1})$ and $p_2(x_{i2})$ using the logistic regression model.
  \item Draw cure indicators $(X_{i1},X_{i2})$:
  \subitem If $\OR=1$, draw $X_{i1}\sim\mathrm{Bernoulli}(p_1(x_{i1}))$ and $X_{i2}\sim\mathrm{Bernoulli}(p_2(x_{i2}))$ independently.
  \subitem If $\OR\neq 1$, draw $(X_{i1},X_{i2})$ from a multinomial distribution over $\{(1,1),(1,0),(0,1),(0,0)\}$ with cell probabilities $(p_{11,i},p_{10,i},p_{01,i},p_{00,i})$. These probabilities are defined in \Cref{subsec:marginal-cure-cov} as functions of $(p_1(x_{i1}),p_2(x_{i2}),\OR)$.
  \subitem If $\OR=\infty$, then $X_{i1}=X_{i2}$ a.s., so we require $x_{i1}=x_{i2}\eqqcolon x_i$. Draw $X_{i1}\sim\mathrm{Bernoulli}(p_1(x_i))$ and set $X_{i2} = X_{i1}$.
  \item Draw $W_i\sim\mathrm{Gamma}(1/\gamma,\gamma)$ and $Z_{ij}=(1-X_{ij})W_i$.
  \item Draw $(U_{i1},U_{i2})$ from the Gumbel copula with parameter $\theta$.
  \item Define event times:
  \[
    T_{ij}=
    \begin{cases}
      \left(-\dfrac{\log U_{ij}}{r_j Z_{ij}}\right)^{1/a_j} & \text{if } Z_{ij}>0,\\
      \infty & \text{if } Z_{ij}=0.
    \end{cases}
  \]
  \item Draw $C_i\sim\mathrm{Uniform}(0,6)$ and set $t_{ij}=\min\{T_{ij},C_i\}$, $\delta_{ij}=I(T_{ij}\le C_i)$.
\end{enumerate}

We consider two parameter settings with sample sizes $n \in \{200, 400\}$. Setting A corresponds to positive dependence between cure indicators ($\OR = 2$), with
\begin{equation*}
  (\theta, \gamma, \OR, a_1, r_1, a_2, r_2, \beta_{01}, \beta_{02}, \beta_1, \beta_2)
  = (2, 0.5, 2, 1, 1.5, 1, 2, 1, -1, -1, 1),
\end{equation*}
whereas Setting B represents negative cure dependence ($\OR = 0.5$), with
\begin{equation*}
  (\theta, \gamma, \OR, a_1, r_1, a_2, r_2, \beta_{01}, \beta_{02}, \beta_1, \beta_2)
  = (0.5, 0.5, 0.5, 1, 1.5, 1, 2, 1, -1, -1, 1).
\end{equation*}
In Settings A and B, the values of $\theta$ and $\OR$ differ. The MLE and LRT simulations share all parameter values except $\OR$, which takes different values in the LRT study.

Under this censoring mechanism and the parameter settings above, the average censoring rates are approximately $69\%$ for margin 1 and $47\%$ for margin 2 in both settings. These values are reasonably close to the censoring rates observed in the real data analyzed in \Cref{sec:realdata} ($72.6\%$ for margin 1 and $48.7\%$ for margin 2).

\subsubsection{Maximum likelihood estimation}
For each simulated dataset, we obtain the maximum likelihood estimator by directly maximizing the log-likelihood using the BFGS algorithm (\texttt{optim} in \textsf{R}). To enhance numerical stability, we use multiple starting values generated by systematic perturbations of the initial values, and retain a converged solution with the largest log-likelihood. The Hessian is evaluated at the maximizer, and standard errors are computed from the inverse of the observed information matrix. For each scenario, we conduct $1{,}000$ Monte Carlo replications.

\subsubsection{\texorpdfstring{Likelihood ratio test for $H_0:\OR=1$}{Likelihood ratio test for H0: R=1}}
We use the same parameter settings as in the MLE simulations, except for $\OR$. In the LRT simulations, $\OR$ is fixed at $\OR=1$ under the null hypothesis, whereas under the alternative it is set to a range of prespecified values to assess how the power varies with the strength of cure dependence. The simulation settings for evaluating the Type I error and power are as follows.

Under $H_0$ ($\OR=1$), we conduct $5,000$ Monte Carlo replications for each sample size. For each simulated dataset, we compute the likelihood ratio statistic defined in \Cref{subsec:lrt}. Empirical rejection probabilities are evaluated using the asymptotic $\chi^2_1$ critical values across a range of nominal significance levels.

Under $H_1$ ($\OR\neq 1$), we evaluate the power of the LRT based on the $\chi^2_{1,\,0.05}$ critical value. We consider $\OR \in \{0.25, 0.333, 0.5, 0.667, 0.75, 0.85, 1, 1.2, 1.5, 2, 3, 4, 8, 16\}$ to represent varying strengths and directions of cure dependence. For each combination of $(n, \OR)$, we generate $2,000$ Monte Carlo replications and compute the likelihood ratio statistic defined in \Cref{subsec:lrt}.
The empirical power is estimated as the proportion of replications for which the null hypothesis is rejected.

\subsection{Results}
\subsubsection{Maximum likelihood estimation}
We apply this procedure to the proposed covariate model; see \Cref{subsec:mle} for the log-likelihood and estimation procedure. For each parameter, we summarize the mean estimate, absolute bias, mean squared error (MSE), empirical standard deviation (SD), the average estimated standard error (SE (mean)), and the coverage probability (CP).

\begin{table}[htbp]
  \caption{Estimation performance for model with covariates (Setting A: $\theta=2$, $\OR = 2$; $n=200, 400$)}
  \label{tab:cov_param_a}
  \begin{tabular}{c c S S S c c c c}
    \toprule
    {} & Parameter & {True} & {Mean} & {Bias} & {MSE} & {SD} & {SE (mean)} & {CP} \\
    \midrule
    \multirow[t]{11}{*}{$n=200$}
    & $\theta$      & 2.0  & 2.060  & 0.060  & 0.379 & 0.613 & 0.581 & 0.950 \\
    & $\gamma$      & 0.5  & 0.680  & 0.180  & 0.457 & 0.652 & 0.550 & 0.887 \\
    & $\OR$         & 2.0  & 2.058  & 0.058  & 0.914 & 0.955 & 0.875 & 0.962 \\
    & $a_1$         & 1.0  & 1.048  & 0.048  & 0.031 & 0.170 & 0.158 & 0.920 \\
    & $r_1$         & 1.5  & 1.696  & 0.196  & 0.474 & 0.660 & 0.565 & 0.936 \\
    & $a_2$         & 1.0  & 1.047  & 0.047  & 0.030 & 0.168 & 0.153 & 0.925 \\
    & $r_2$         & 2.0  & 2.341  & 0.341  & 1.398 & 1.133 & 0.885 & 0.918 \\
    & $\beta_{01}$  & 1.0  & 0.989  & -0.011 & 0.136 & 0.368 & 0.361 & 0.954 \\
    & $\beta_{02}$  & -1.0 & -1.129 & -0.129 & 0.305 & 0.537 & 0.452 & 0.956 \\
    & $\beta_{1}$   & -1.0 & -1.031 & -0.031 & 0.394 & 0.627 & 0.597 & 0.945 \\
    & $\beta_{2}$   & 1.0  & 1.111  & 0.111  & 0.500 & 0.698 & 0.647 & 0.959 \\
    \midrule
    \multirow[t]{11}{*}{$n=400$}
    & $\theta$      & 2.0  & 2.010  & 0.010  & 0.165 & 0.406 & 0.410 & 0.957 \\
    & $\gamma$      & 0.5  & 0.611  & 0.111  & 0.224 & 0.460 & 0.408 & 0.932 \\
    & $\OR$         & 2.0  & 2.041  & 0.041  & 0.458 & 0.676 & 0.633 & 0.956 \\
    & $a_1$         & 1.0  & 1.022  & 0.022  & 0.014 & 0.116 & 0.111 & 0.936 \\
    & $r_1$         & 1.5  & 1.596  & 0.096  & 0.154 & 0.380 & 0.352 & 0.938 \\
    & $a_2$         & 1.0  & 1.022  & 0.022  & 0.013 & 0.112 & 0.108 & 0.936 \\
    & $r_2$         & 2.0  & 2.144  & 0.144  & 0.364 & 0.586 & 0.525 & 0.937 \\
    & $\beta_{01}$  & 1.0  & 0.974  & -0.026 & 0.066 & 0.255 & 0.253 & 0.947 \\
    & $\beta_{02}$  & -1.0 & -1.065 & -0.065 & 0.102 & 0.313 & 0.309 & 0.947 \\
    & $\beta_{1}$   & -1.0 & -0.998 & 0.002  & 0.176 & 0.420 & 0.415 & 0.954 \\
    & $\beta_{2}$   & 1.0  & 1.049  & 0.049  & 0.199 & 0.444 & 0.438 & 0.953 \\
    \bottomrule
  \end{tabular}
\end{table}

\begin{table}[htbp]
  \caption{Estimation performance for model with covariates (Setting B: $\theta=0.5$, $\OR = 0.5$; $n=200, 400$)}
  \label{tab:cov_param_b}
  \begin{tabular}{c c S S S c c c c}
    \toprule
    {} & Parameter & {True} & {Mean} & {Bias} & {MSE} & {SD} & {SE (mean)} & {CP} \\
    \midrule
    \multirow[t]{11}{*}{$n=200$}
    & $\theta$      & 0.5  & 0.545  & 0.045  & 0.107 & 0.324 & 0.293 & 0.939 \\
    & $\gamma$      & 0.5  & 0.683  & 0.183  & 0.500 & 0.683 & 0.510 & 0.886 \\
    & $\OR$         & 0.5  & 0.472  & -0.028 & 0.059 & 0.240 & 0.214 & 0.978 \\
    & $a_1$         & 1.0  & 1.039  & 0.039  & 0.035 & 0.182 & 0.163 & 0.931 \\
    & $r_1$         & 1.5  & 1.707  & 0.207  & 0.810 & 0.876 & 0.590 & 0.934 \\
    & $a_2$         & 1.0  & 1.035  & 0.035  & 0.028 & 0.163 & 0.145 & 0.894 \\
    & $r_2$         & 2.0  & 2.278  & 0.278  & 1.056 & 0.990 & 0.793 & 0.880 \\
    & $\beta_{01}$  & 1.0  & 0.979  & -0.021 & 0.155 & 0.393 & 0.369 & 0.935 \\
    & $\beta_{02}$  & -1.0 & -1.123 & -0.123 & 0.248 & 0.483 & 0.435 & 0.934 \\
    & $\beta_{1}$   & -1.0 & -1.029 & -0.029 & 0.399 & 0.631 & 0.602 & 0.935 \\
    & $\beta_{2}$   & 1.0  & 1.063  & 0.063  & 0.468 & 0.682 & 0.637 & 0.956 \\
    \midrule
    \multirow[t]{11}{*}{$n=400$}
    & $\theta$      & 0.5  & 0.529  & 0.029  & 0.051 & 0.225 & 0.208 & 0.934 \\
    & $\gamma$      & 0.5  & 0.577  & 0.077  & 0.188 & 0.427 & 0.367 & 0.926 \\
    & $\OR$         & 0.5  & 0.486  & -0.014 & 0.030 & 0.171 & 0.163 & 0.978 \\
    & $a_1$         & 1.0  & 1.014  & 0.014  & 0.015 & 0.120 & 0.113 & 0.927 \\
    & $r_1$         & 1.5  & 1.569  & 0.069  & 0.149 & 0.380 & 0.346 & 0.928 \\
    & $a_2$         & 1.0  & 1.014  & 0.014  & 0.012 & 0.108 & 0.102 & 0.931 \\
    & $r_2$         & 2.0  & 2.091  & 0.091  & 0.293 & 0.534 & 0.474 & 0.918 \\
    & $\beta_{01}$  & 1.0  & 0.982  & -0.018 & 0.071 & 0.266 & 0.258 & 0.954 \\
    & $\beta_{02}$  & -1.0 & -1.051 & -0.051 & 0.098 & 0.308 & 0.299 & 0.950 \\
    & $\beta_{1}$   & -1.0 & -1.000 & 0.000  & 0.184 & 0.429 & 0.421 & 0.960 \\
    & $\beta_{2}$   & 1.0  & 1.031  & 0.031  & 0.190 & 0.435 & 0.435 & 0.957 \\
    \bottomrule
  \end{tabular}
\end{table}

Based on the results reported in \Cref{tab:cov_param_a,tab:cov_param_b}, the proposed maximum likelihood estimator demonstrates satisfactory finite-sample performance across both settings. For most parameters, the mean estimates are close to the true values, indicating that the estimator is approximately unbiased. As expected, increasing the sample size from $n=200$ to $n=400$ yields substantial improvements in estimation accuracy: biases are reduced and MSEs decrease, consistent with standard asymptotic theory.

The reliability of variance estimation is also confirmed by the close agreement between the empirical standard deviation (SD) and the mean of the estimated standard errors (SE (mean)) for most parameters. While some parameters show undercoverage for $n=200$, the coverage probabilities for $n=400$ are generally close to the nominal $95\%$ level.

Comparing Settings A and B, the estimator maintains consistent performance regardless of whether the odds ratio indicates positive ($\OR=2$) or negative ($\OR=0.5$) dependence between the cure indicators. The larger absolute MSE observed for $\OR=2$ compared to $\OR=0.5$ reflects the difference in parameter scale rather than a deterioration in relative precision. Overall, these results support the validity of the proposed method, demonstrating that it provides accurate point estimates under various dependence structures.

\subsubsection{\texorpdfstring{Likelihood ratio test for $H_0:\OR=1$}{Likelihood ratio test for H0: R=1}}
\label{subsec:sim-lrt}

We evaluate the finite-sample performance of the likelihood ratio test for $H_0:\OR=1$ in terms of type~I error control and power.

\begin{figure}[ht]
    \begin{center}
      \includegraphics[width=\textwidth]{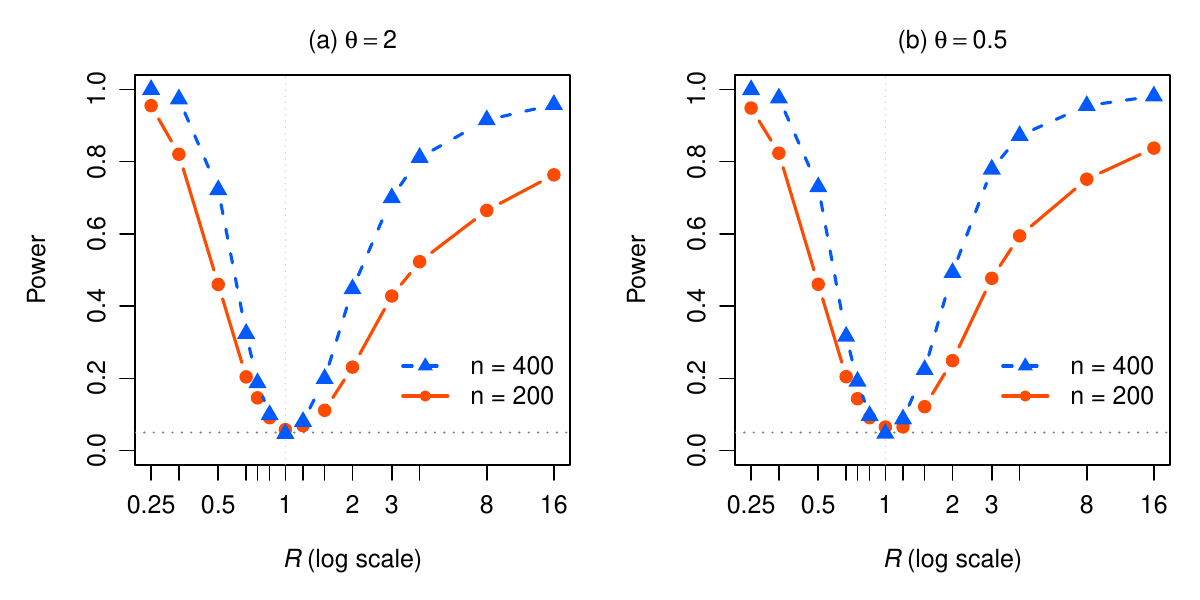}
    \end{center}
    \caption{Empirical power of the likelihood ratio test for $H_0:\OR=1$ at the $5\%$ significance level under two parameter settings, based on $2,000$ Monte Carlo replications, where panels (a) and (b) correspond to Settings A ($\theta = 2$) and B ($\theta = 0.5$), respectively, and power is shown as a function of $\OR$ (log scale) for $n=200$ and $n=400$; the point at $\OR=1$ represents the empirical Type I error under $H_0$, and the horizontal dotted line indicates the nominal level $\alpha=0.05$}
    \label{fig:sim_lrt_power}
\end{figure}

\Cref{fig:sim_lrt_power} presents the empirical power of the LRT at the $5\%$ level. At $\OR=1$, the empirical rejection rates remain close to the nominal $5\%$ level, confirming that the type~I error is well controlled. As expected, power increases with sample size and as $\OR$ departs from $1$. The power is asymmetric around $\OR=1$, with higher power for $0<\OR<1$ than for $\OR>1$, even when $\OR$ is close to $1$. The overall power pattern is similar under Settings A and B.

\section{Real data analysis}
\label{sec:realdata}
In this section, we apply the proposed cure frailty-copula models to a real dataset and compare them with existing models. We assess the adequacy of these models through information criteria (AIC and BIC). Model diagnostic plots are also given for the best-fitting model.

\subsection{Data and analysis setup}
We analyze the diabetic retinopathy dataset from the \textsf{R} package \texttt{survival} \citep[version 3.8-3]{survival-package}. The dataset originates from the Diabetic Retinopathy Study \citep{husterModellingPairedSurvival1989} and contains paired survival times for both eyes of 197 patients with diabetic retinopathy. For each patient, one eye was randomized to receive laser treatment and the other eye received no treatment. The event of interest is the time from treatment initiation to vision loss. We index the treated and untreated eyes by $j = 1$ and $j = 2$, respectively, and denote the corresponding survival times by $T_1$ and $T_2$. Censoring was primarily subject-level (death, dropout, or study termination), so censoring times were common to both eyes. The sole exception occurred when one eye had already experienced the event; in that case, only the remaining eye could be censored.

We treat the paired survival times from the two eyes as a bivariate survival outcome and fit the proposed models under several dependence specifications (including the independence and Gumbel copulas), with and without covariates, and compare them using AIC, BIC, and the maximized log-likelihood.

The covariates considered in the analysis are age (in years, standardized), denoted by $x_{i,\text{age}}$, and risk score (6--12), denoted by $x_{ij,\text{score}}$. Age is a patient-level covariate and thus does not depend on the margin $j$, whereas the risk score is computed separately for each eye and hence may vary with margin $j$. Our model allows covariates to be either margin-specific or margin-invariant, and both types are incorporated into the marginal cure components as described in \Cref{subsec:marginal-cure-cov}.

\subsection{Model selection results}
We summarize model selection and comparison with the existing model of \citet{rouzbahaniNewBivariateSurvival2025} in \Cref{tab:model_selection}. We fit the proposed zero-inflated gamma cure frailty-copula models under two copula choices (independence and Gumbel) and four specifications for the odds ratio $\OR$ governing dependence between the marginal cure fractions: (i) $0 < \OR < 1$, representing negative association; (ii) $\OR = 1$, indicating no association; (iii) $\OR > 1$, representing positive association; and (iv) $\OR = \infty$, corresponding to perfect positive association (i.e., $X_1 = X_2$ almost surely). We also report results for the model of \citet{rouzbahaniNewBivariateSurvival2025} on the same diabetic retinopathy dataset. The reported values are obtained from our own implementation of their specification, which corresponds to the special case $\OR=1$ in our framework.

\clearpage
\begin{table}[htbp]
  \centering
  \caption{Model selection for the real data analysis under the Gumbel and independence copulas, without and with covariates, and comparison with an existing model}
  \label{tab:model_selection}
  \setlength{\tabcolsep}{3pt}
  \renewcommand{\arraystretch}{1.05}
  \begin{tabular}{llllcccc}
    \toprule
    Copula & Model & Covariates & Odds ratio & AIC & BIC & Log-likelihood & \#par \\
    \midrule
    \multirow[t]{9}{*}{Gumbel}
      & \multirow[t]{8}{*}{Proposed}
      & \multirow[t]{4}{*}{Without}
      & $\OR = 1$          & 1666.012 & 1692.278 & $-825.006$ & 8 \\
      & & & $0 < \OR < 1$  & 1667.833 & 1697.382 & $-824.917$ & 9 \\
      & & & $1 < \OR$      & 1668.042 & 1697.591 & $-825.021$ & 9 \\
      & & & $\OR = \infty$ & 1668.839 & 1691.822 & $-827.420$ & 7 \\
      \addlinespace[0.6em]
      & & \multirow[t]{4}{*}{With}
      & $\OR = 1$          & 1664.928 & 1704.327 & $-820.464$ & 12 \\
      & & & $1 < \OR$      & 1666.445 & 1709.127 & $-820.223$ & 13 \\
      & & & $0 < \OR < 1$  & 1666.934 & 1709.615 & $-820.467$ & 13 \\
      & & & $\OR = \infty$ & 1670.772 & 1697.038 & $-827.386$ & 8 \\
      \addlinespace[0.35em]
      & Existing & With & $\OR = 1$ & 1665.433 & 1704.831 & $-820.716$ & 12 \\
    \midrule
    \multirow[t]{9}{*}{Independence}
      & \multirow[t]{8}{*}{Proposed}
      & \multirow[t]{4}{*}{Without}
        & $\OR = 1$      & 1664.012 & 1686.994 & $-825.006$ & 7 \\
      & & & $0 < \OR < 1$  & 1665.833 & 1692.099 & $-824.916$ & 8 \\
      & & & $1 < \OR$      & 1666.012 & 1692.277 & $-825.006$ & 8 \\
      & & & $\OR = \infty$ & 1666.838 & $\boldsymbol{1686.537}$ & $-827.419$ & 6 \\
      \addlinespace[0.6em]
      & & \multirow[t]{4}{*}{With}
        & $\OR = 1$      & $\boldsymbol{1662.928}$ & 1699.043 & $-820.464$ & 11 \\
      & & & $1 < \OR$      & 1664.445 & 1703.843 & $-820.222$ & 12 \\
      & & & $0 < \OR < 1$  & 1664.928 & 1704.327 & $-820.464$ & 12 \\
      & & & $\OR = \infty$ & 1668.768 & 1691.751 & $-827.384$ & 7 \\
      \addlinespace[0.35em]
      & Existing & With & $\OR = 1$ & 1686.241 & 1722.357 & $-832.121$ & 11 \\
    \bottomrule
    \vspace{0.1pt}
  \end{tabular}
  \begin{minipage}{0.95\linewidth}
    Note: \#par denotes the number of parameters. \textbf{Boldface} indicates the smallest AIC and BIC.
  \end{minipage}
\end{table}

As shown in \Cref{tab:model_selection}, our proposed model achieves the best fit, attaining the smallest AIC and BIC among all candidates. Although $\OR=\infty$ (without covariates) yields the smallest BIC, it implies $X_1=X_2$ almost surely, which is implausible for this dataset, and is therefore excluded. Among the remaining candidates, the independence copula model with covariates and $\OR=1$ is selected, suggesting that the zero-inflated gamma frailty formulation--by separating the cure fractions from the other model components and allowing a continuous frailty component (strictly, a mixture of a point mass at zero and a continuous distribution)--provides greater flexibility than a Poisson frailty and leads to the improved fit.

Based on these results, two conclusions can be drawn. First, the independence copula is preferred over the Gumbel copula, suggesting conditional independence of the two survival times given frailty; this contrasts with \citet{rouzbahaniNewBivariateSurvival2025}, who reported better fit for a Gumbel-based specification. Second, the information criteria favor $\OR=1$, indicating little evidence of dependence in the cure process; clinically, this suggests that laser treatment does not directly influence the cure outcome of the untreated eye. This aspect could not be assessed in \citet{rouzbahaniNewBivariateSurvival2025}, who did not consider $\OR\neq 1$. By contrast, our framework allows this assumption to be examined explicitly, and we further evaluate it via a likelihood ratio test for $H_0: \OR=1$ below.

Information-criteria-based model selection suggests that the model with $\OR=1$ is preferred under both the independence and Gumbel copula specifications. We further evaluate this finding by applying the likelihood ratio test described in \Cref{subsec:lrt}; its finite-sample performance is investigated in \Cref{subsec:sim-lrt}.

Specifically, we test $H_0:\OR=1$ against $H_1:\OR\neq 1$ for the fitted models with covariates. For the independence copula model, using the maximized log-likelihood values reported in \Cref{tab:model_selection}, we obtain $\lambda = 2\left\{-820.222-(-820.464)\right\} = 0.484$. Since $\lambda=0.484$ is smaller than $\chi^2_{1,0.05}=3.841$, we do not reject $H_0$. For the Gumbel copula model, the test statistic is $\lambda=0.482$, which is also smaller than $\chi^2_{1,0.05}=3.841$, and thus $H_0$ is not rejected. Given the moderate sample size ($n=197$), the LRT may have limited power to detect weak cure dependence (see \Cref{subsec:sim-lrt}). While the failure to reject $H_0$ suggests that strong cure dependence is unlikely, weaker dependence may have gone undetected.

\subsection{Results under the best model}
We now examine the fitted results under the selected model, the zero-inflated gamma cure frailty independence copula model with covariates ($\OR=1$). In particular, we report the parameter estimates and assess the implied cure fractions, dependence structure, and adequacy of the fitted marginal survival functions.

\begin{table}[htbp]
  \fontsize{9.5pt}{11.5pt}\selectfont
  \caption{Maximum likelihood estimates for the zero-inflated gamma cure frailty independence copula model with covariates ($\OR = 1$) fitted to the diabetic retinopathy dataset}
  \label{tab:est_params_retino_cov}
  \setlength{\tabcolsep}{9pt}
  \renewcommand{\arraystretch}{1.05}
  \begin{tabular}{c S S c}
    \toprule
    {Parameter} & {Estimate} & {SE} & {95\% CI} \\
    \midrule
    $\gamma$                  & 1.670  & 0.544 & $[0.882,\, 3.162]$ \\
    $a_1$                     & 1.210  & 0.208 & $[0.863,\, 1.695]$ \\
    $r_1$                     & 0.014  & 0.007 & $[0.005,\, 0.038]$ \\
    $a_2$                     & 1.221  & 0.149 & $[0.961,\, 1.551]$ \\
    $r_2$                     & 0.021  & 0.008 & $[0.010,\, 0.043]$ \\
    $\beta_{01}$              & -0.177 & 0.388 & $[-0.938,\, 0.585]$ \\
    $\beta_{02}$              & -1.994 & 0.877 & $[-3.714,\, -0.275]$ \\
    $\beta_{1,\text{age}}$    & 0.311  & 0.250 & $[-0.179,\, 0.800]$ \\
    $\beta_{1,\text{score}}$  & -0.202 & 0.276 & $[-0.743,\, 0.340]$ \\
    $\beta_{2,\text{age}}$    & -1.126 & 0.823 & $[-2.739,\, 0.487]$ \\
    $\beta_{2,\text{score}}$  & -0.236 & 0.350 & $[-0.922,\, 0.450]$ \\
    \bottomrule
  \end{tabular}
\end{table}
\Cref{tab:est_params_retino_cov} presents the parameter estimates, standard errors, and $95\%$ confidence intervals for all model parameters. For the cure component, the intercept for the untreated eye, $\beta_{02}$, has a $95\%$ CI that is strictly negative, indicating a lower marginal cure fraction for the untreated eye. In contrast, the regression coefficients for age and risk score are not statistically significant in either the cure or survival components, suggesting limited covariate contributions in the present dataset.

Although survival times are assumed to be conditionally independent given frailty through the independence copula, the estimated frailty parameter $\hat{\gamma} = 1.670$ yields a positive Kendall's tau for the uncured subpopulation ($\tau_{00} = 0.455$). By contrast, the overall rank correlation coefficients accounting for cure are approximately $\tau_b = 0.107$ and $\rho_b = 0.137$, implying weak population-level dependence. This discrepancy reflects the attenuation of association by the presence of cured subjects. Overall, the dependence structure is primarily explained by shared frailty among uncured subjects, whereas the cure process itself appears to operate independently across eyes.

To further assess model adequacy, we compare the model-based marginal survival functions with the Kaplan--Meier (K--M) estimators. The marginal cure fractions are computed by averaging over covariates as
\begin{align*}
  p_j
  &\coloneqq \E\left[p_j\left(\boldsymbol{x}_{i j}\right)\right]
  = \frac{1}{n}\sum_{i=1}^{n} \frac{\exp\left(\beta_{0j} + \beta_{j,\text{age}} x_{i,\text{age}} + \beta_{j,\text{score}} x_{ij,\text{score}}\right)}{1 + \exp\left(\beta_{0j} + \beta_{j,\text{age}} x_{i,\text{age}} + \beta_{j,\text{score}} x_{ij,\text{score}}\right)}, \quad j = 1, 2,
\end{align*}
where $n = 197$ is the total number of patients.
\begin{figure}[H]
    \begin{center}
            \includegraphics[width=\textwidth]{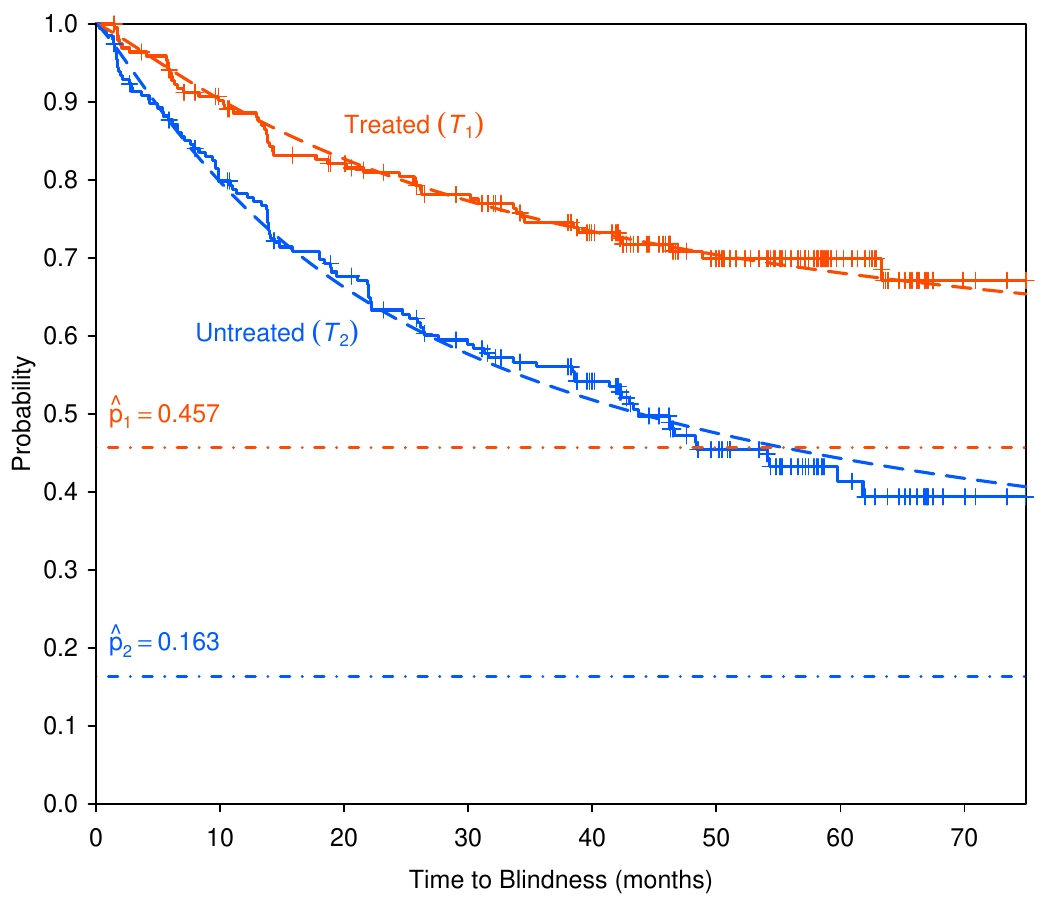}
    \end{center}
    \caption{Kaplan--Meier (solid lines), model-based marginal survival curves
(dashed lines), and estimated marginal cure fractions (horizontal dot-dashed
lines) for $T_1$ (treated eye, red) and $T_2$ (untreated eye, blue) in the
Retinopathy dataset}
    \label{fig:km_vs_fitted}
\end{figure}
\Cref{fig:km_vs_fitted} overlays the fitted marginal survival functions (dashed lines) and the K--M curves (solid lines) for both eyes. The fitted curves closely match the K--M estimators, confirming that the model adequately captures the overall survival behavior. In particular, the treated eye maintains a higher survival probability throughout the follow-up period, suggesting that laser treatment effectively delays vision loss. However, under right censoring, the K--M estimator cannot directly recover tail behavior and can at most provide an upper bound on the cure fraction. In contrast, the proposed cure model offers a principled framework for estimating the cure fraction: by plugging the estimates into the marginal cure fraction formula, we obtain $\hat{p}_1 = 0.457$ and $\hat{p}_2 = 0.163$, shown as horizontal dot-dashed lines in the figure. These estimates characterize the long-term survival pattern induced by the presence of cured subjects.

\section{Conclusion}\label{sec:conclusion}
In this paper, we proposed a general bivariate cure frailty-copula framework for paired survival data that includes several existing models as special cases while accommodating cure fractions, continuous frailty-induced heterogeneity, and flexible copula-based dependence. The model separates two distinct sources of dependence. One arises from marginal cure probabilities and is parameterized through a common odds ratio. The other arises from event times among susceptible subjects and is modeled via a copula conditional on a gamma frailty.

We also derived explicit expressions for tie-corrected Kendall’s tau and Spearman’s rho in the presence of a cure fraction, thereby providing interpretable rank-based measures of dependence for bivariate cure models both at the population level and within the uncured subpopulation. To our knowledge, such rank-based dependence measures have not previously been explicitly formulated in the setting of bivariate cure models.

For suitable choices of marginal distributions and copulas, we derived closed-form expressions for the joint survival function after integrating out frailty, enabling computationally tractable likelihood-based inference. We developed maximum likelihood estimation and a likelihood ratio test of $H_0:\OR=1$, and verified their sound finite-sample performance through simulation studies. In the Diabetic Retinopathy Study analysis, the proposed model achieved better overall fit than existing approaches. It also revealed dependence patterns that differ from those suggested by existing models. This highlights the benefit of separating cure-related dependence from susceptible-time dependence. It also demonstrates the advantage of replacing the discrete frailty assumption common in existing cure--frailty formulations by adopting a zero-inflated gamma frailty.

Several directions merit further investigation. Although $\OR=1$ was favored for the dataset in \Cref{sec:realdata}, settings with $\OR>1$ are clinically plausible (e.g., bilateral kidney infections after antimicrobial therapy). The proposed framework is directly applicable to such datasets and provides a practical tool for researchers to model dependence in cure fractions as well as in survival outcomes. Beyond such applications, methodological extensions are also of interest. First, motivated by more flexible frailty distributions—such as the power variance function (PVF) family—and mixture frailty models, it would be of interest to extend the proposed framework beyond gamma frailty. A key question is when the bivariate survival function remains available in closed form after marginalizing over frailty, depending on the chosen marginals and copula. Second, while a lower-tail dependence coefficient can be defined naturally within the susceptible subpopulation, defining an upper-tail dependence coefficient at the population level in the presence of a cure fraction warrants further study.

\backmatter

\bmhead{Supplementary information}

Online Resource 1 contains additional technical details, including numerical validation of rank correlation coefficients (Section S1), derivations of the log-likelihood functions for the independence and FGM copulas (Section S2), simulation results for the model without covariates (Section S3), and a comparison of the EM algorithm with direct optimization (Section S4).

\bmhead{Acknowledgements}

This work was supported in part by The Graduate University for Advanced Studies, SOKENDAI. The second author was supported financially by JSPS KAKENHI (24K00273, 25K15037), and the third author by JSPS KAKENHI (22K11948). This study was carried out under the ISM Cooperative Research Program (2026-ISMCRP-5004).

\bmhead{Code availability}

The \textsf{R} package \texttt{curecopula} will be submitted to CRAN and is also available on GitHub. The code used to reproduce the figures and tables in this paper is maintained in a separate GitHub repository. Both repositories are currently private and will be made publicly available later in the review process. The results reported in this paper were generated using \texttt{curecopula} version 0.1.0.

\begin{appendices}

\section{Derivation of the bivariate survival function}
\label{app:surv}
In this appendix, we derive the unconditional bivariate survival function under the proposed model. The key idea is to apply the law of total expectation by conditioning on the latent cure indicators $(X_1, X_2)$ and the shared gamma frailty $W$. Recall that $Z_j=(1-X_j)W$, where $X_j=1$ indicates cure and $X_j=0$ indicates susceptibility. Thus, $Z_j=0$ for cured subjects and $Z_j=W$ otherwise.
\begin{align*}
    S(t_1, t_2)
    &= \Prob(T_1 > t_1, T_2 > t_2) \\
    &= \E_{Z_1, Z_2} [\Prob(T_1 > t_1, T_2 > t_2 \mid Z_1, Z_2)] \\
    &= \E_{X_1, X_2, W} [\Prob(T_1 > t_1, T_2 > t_2 \mid X_1, X_2, W)] \\
    &= \Prob(X_1 = 1, X_2 = 1) \Prob(T_1 > t_1, T_2 > t_2 \mid Z_1 = 0, Z_2 = 0)  \\
    &\quad + \Prob(X_1 = 0, X_2 = 1) \E_W [\Prob(T_1 > t_1, T_2 > t_2 \mid Z_1 = W, Z_2 = 0)] \\
    &\quad + \Prob(X_1 = 1, X_2 = 0) \E_W [\Prob(T_1 > t_1, T_2 > t_2 \mid Z_1 = 0, Z_2 = W)] \\
    &\quad + \Prob(X_1 = 0, X_2 = 0) \E_W [\Prob(T_1 > t_1, T_2 > t_2 \mid Z_1 = W, Z_2 = W)] \\
    &= p_{11}
    + p_{01} \E_W [\Prob(T_1 > t_1 \mid Z_1 = W)]
    + p_{10} \E_W [\Prob(T_2 > t_2 \mid Z_2 = W)] \\
    &\quad + p_{00} \E_W [\Prob(T_1 > t_1, T_2 > t_2 \mid Z_1 = W, Z_2 = W)] \\
    &= p_{11} + p_{01} \E_W [S_1(t_1 \mid W)] + p_{10} \E_W[S_2(t_2 \mid W)] \\
    &\quad + p_{00} \E_W[C_{\theta} (S_1(t_1 \mid W), S_2(t_2 \mid W))]
\end{align*}
To express the conditional marginal survival functions in terms of the baseline survival functions, we use the multiplicative frailty specification \citep{wienkeFrailtyModelsSurvival2010}. Specifically, given the shared frailty $W$, the conditional marginal hazard of $T_j$ is
\[
  h_j(t \mid W) = W h_{0j}(t),
\]
where $h_{0j}(t)$ denotes the baseline hazard, for $j=1,2$. By the standard relationship between the hazard and survival functions, we obtain
\[
  S_j(t_j \mid W)
  = \exp\left\{-\int_{0}^{t_j} h_j(u \mid W)\,\mathrm{d}u\right\}
  = S_{0j}(t_j)^W,
\]
where $S_{0j}(t_j)=\exp\left\{-\int_{0}^{t_j} h_{0j}(u)\,\mathrm{d}u\right\}$ is the baseline survival function.
\begin{align*}
    S(t_1, t_2)
    &= p_{11} + p_{01} \E_W [S_{01}(t_1)^W] + p_{10} \E_W [S_{02}(t_2)^W] \\
    &\quad + p_{00} \E_W [C_{\theta}(S_{01}(t_1)^W, S_{02}(t_2)^W)] \\
    &= p_{11} + p_{01} \int_0^{\infty} S_{01}(t_1)^w f_\gamma(w) \, \mathrm{d}w + p_{10} \int_0^{\infty} S_{02}(t_2)^w f_\gamma(w) \, \mathrm{d}w \\
    &\quad + p_{00} \int_0^{\infty} C_{\theta}\left(S_{01}(t_1)^w, S_{02}(t_2)^w\right) f_\gamma(w) \, \mathrm{d}w.
\end{align*}
The integrability of the last term depends on the choice of copula.

\section{Derivation of the joint probabilities under a fixed odds ratio}
\label{app:OR-deriv}
Marginal cure rates are given by $p_j = \Prob(X_j=1) \in (0, 1),\, j=1,2$. The joint probability is $p_{rs}=\Prob(X_1=r, X_2=s)$ for $r,s \in \{0,1\}$. The odds ratio is defined as $\OR = (p_{11}p_{00})/(p_{10}p_{01})$. Then the following system of equations is satisfied:
\begin{equation*}
  p_{11} + p_{10} = p_1,
  \qquad p_{11} + p_{01} = p_2, 
  \qquad p_{11} + p_{10} + p_{01} + p_{00} = 1.
\end{equation*}
Thus, we have $p_{10}=p_1-p_{11}, p_{01}=p_2-p_{11}, p_{00}=1-p_1-p_2+p_{11}$. Substituting these expressions for $p_{10}$, $p_{01}$, and $p_{00}$ into the definition of $\OR$, we obtain an equation in terms of $p_{11}$:
\begin{align*}
    &\OR = \frac{p_{11} p_{00}}{p_{10} p_{01}}
    = \frac{p_{11}(1+p_{11}-p_1-p_2)}{(p_1-p_{11})(p_2 - p_{11})} \\
    &\iff
    0 = (\OR - 1) p_{11}^2 - \{(\OR - 1)(p_1 + p_2) + 1\} p_{11} + \OR p_1 p_2.
\end{align*}
If $\OR = 1$, the joint probabilities are given by products of the marginals, and hence $X_1$ and $X_2$ are independent. If $\OR \neq 1$, then $X_1$ and $X_2$ are dependent. In this case, the above equation is quadratic in $p_{11}$. Let $D \coloneqq \{(\OR - 1)(p_1 + p_2) + 1\}^2 - 4 \OR (\OR - 1)p_1 p_2$ be the discriminant. Since it can be shown that $D > 0$, the equation has two distinct real roots. The solutions for $p_{11}$ are given by
\begin{equation*}
  p_{11} = \frac{(\OR - 1)(p_1 + p_2) + 1 \pm \sqrt{D}}{2(\OR - 1)}.
\end{equation*}
We denote the two solutions by $p_{11}^{+}$ and $p_{11}^{-}$ corresponding to the plus and minus signs, respectively.
We now show that $p_{11}^{-}$ satisfies the Fr\'echet bounds
\[
\max\{p_1+p_2-1, 0\} \leq p_{11}^{-} \leq \min\{p_1, p_2\},
\]
and hence yields a valid $2\times2$ joint distribution with marginals $(p_1,p_2)$, whereas $p_{11}^{+}$ never corresponds to a valid joint probability.

\medskip
\noindent\textbf{(I) $p_{11}^{-}$ satisfies the Fr\'echet bounds.}
The expression for $p_{11}^{-}$ coincides with the Plackett copula evaluated at $(p_1,p_2)$ with parameter $\OR$ \citep{nelsenIntroductionCopulas2006}. Since any copula $C$ satisfies the Fr\'echet--Hoeffding inequalities
\[
\max\{u+v-1, 0\} \leq C(u,v) \leq \min\{u,v\}, \quad u,v\in[0,1],
\]
we obtain, by setting $(u,v)=(p_1, p_2)$,
\[
\max\{p_1+p_2-1, 0\} \leq p_{11}^{-} \leq \min\{p_1, p_2\}.
\]
Hence $p_{10}=p_1-p_{11}^{-}\ge0$, $p_{01}=p_2-p_{11}^{-}\ge0$, and
$p_{00}=1-p_1-p_2+p_{11}^{-}\ge0$, so $p_{11}^{-}$ yields valid joint probabilities.

\medskip
\noindent\textbf{(II) $p_{11}^{+}$ is invalid.}
We consider two cases depending on the value of $\OR$.

\textit{Case (i): $0 < \OR < 1$.}
By direct computation, the product of the two roots satisfies
\[
  p_{11}^{-} p_{11}^{+} = \frac{\OR p_1 p_2}{\OR - 1}.
\]
Since $\OR - 1 < 0$ and $\OR p_1 p_2 > 0$, we have $p_{11}^{-} p_{11}^{+} < 0$. From part (I), $p_{11}^{-}\geq \max\{p_1+p_2-1, 0\}\geq 0$. Since $p_{11}^{-}p_{11}^{+}<0$, we have $p_{11}^{-}>0$, which implies $p_{11}^{+}<0$.

\textit{Case (ii): $\OR > 1$.}
Without loss of generality, assume $p_1 \leq p_2$. For $p_{11}$ to be a valid joint probability, it must satisfy $p_{11} \leq \min(p_1, p_2) = p_1$. We show that $p_{11}^{+} > p_1$, violating this constraint:
\begin{align*}
  p_{11}^{+} > p_1
  &\iff \frac{(\OR - 1)(p_1 + p_2) + 1 + \sqrt{D}}{2(\OR - 1)} > p_1 \\
  &\iff (\OR - 1)(p_1 + p_2) + 1 + \sqrt{D} > 2(\OR - 1)p_1 \\
  &\iff (\OR - 1)(p_2 - p_1) + 1 + \sqrt{D} > 0.
\end{align*}
The last inequality holds because $\OR > 1$, $p_2 \geq p_1$, and $\sqrt{D} > 0$. Therefore $p_{11}^{+} > p_1 = \min(p_1, p_2)$, which violates the Fr\'echet upper bound for joint probabilities.

Combining (I) and (II), we conclude that $\max\{p_1+p_2-1, 0\} \leq p_{11}^{-} \leq \min\{p_1, p_2\}$ and thus yields valid joint probabilities with marginals $(p_1,p_2)$, whereas $p_{11}^{+}$ never yields a valid joint probability.

\medskip
\noindent\textbf{Remark (continuity at $\OR=1$).}
Although the expression for $p_{11}^-(\OR)$ is not defined at $\OR=1$, it admits a continuous extension. By l'H\^opital's rule, $\lim_{\OR\to1} p_{11}^-(\OR)=p_1p_2$, so defining $p_{11}=p_1p_2$ at $\OR=1$ yields a continuous extension of $p_{11}^-(\OR)$. Indeed, letting $f(\OR)=(\OR-1)(p_1+p_2)+1$ and $D(\OR)=f(\OR)^2-4\OR(\OR-1)p_1p_2$, we can write
\[
p_{11}^-(\OR)=\frac{f(\OR)-\sqrt{D(\OR)}}{2(\OR-1)}.
\]
Applying l'H\^opital's rule gives
\[
\lim_{\OR\to1}p_{11}^-(\OR)=\frac{1}{2} \left(f'(1) - \frac{D'(1)}{2\sqrt{D(1)}}\right)=p_1p_2.
\]

\section{Derivation of population rank correlation coefficients}
\label{app:rank_proof}
\citet{pimentelKendallsTauSpearmans2009} proposed Kendall's tau and Spearman's rho for bivariate zero-inflated random variables $(X, Y)$. We adapt their results to our cure-model setting, where a cured margin corresponds to an infinite event time. They consider four regions: $\{X=0, Y=0\}, \{X>0, Y=0\}, \{X=0, Y>0\}$, and $\{X>0, Y>0\}$, which correspond in our setting to $\{T_1 = \infty, T_2 = \infty\}$, $\{T_1 < \infty, T_2 = \infty\}$, $\{T_1 = \infty, T_2 < \infty\}$, and $\{T_1 < \infty, T_2 < \infty\}$, respectively. Because rank-based coefficients depend only on pairwise orderings and ties, the same decomposition applies.

\subsection{Proof of \texorpdfstring{\Cref{thm:kendall}}{Theorem~\ref{thm:kendall}} (Kendall's tau)}
\label{app:proof-kendall}

\begin{proof}
    We first relate $\tau_b$ to the usual Kendall’s tau without tie correction. Consider two independent subjects $k$ and $l$ with observations
    \[
      (\boldsymbol{T}_k, \boldsymbol{T}_l) = ((T_{k1}, T_{k2}), (T_{l1}, T_{l2})).
    \]
    When there are no ties, Kendall’s tau is denoted by $\tau_a$; the tie-adjusted version is denoted by $\tau_b$ \citep{adlerModificationKendallsTau1957,urbanoTreatmentTiesAP2017}. In a sample $\{(t_{i1}, t_{i2})\}_{i=1}^n$, these are
    \[
      \tau_a = \frac{\sum_{i<j} \operatorname{sign}(t_{i1}, t_{j1})\cdot\operatorname{sign}(t_{i2}, t_{j2})}{n(n-1)/2},
      \qquad
      \tau_b = \frac{\sum_{i<j} \operatorname{sign}(t_{i1}, t_{j1})\cdot\operatorname{sign}(t_{i2}, t_{j2})}{\sqrt{n(n-1)/2 - t_{T_1}}\sqrt{n(n-1)/2 - t_{T_2}}},
    \]
    where $\operatorname{sign}(x, y)$ is the comparison sign defined by $\operatorname{sign}(x, y) = 1$ if $x>y$, $-1$ if $x<y$, and $0$ if $x=y$. This definition is well-defined on $(0, \infty]$ and avoids subtraction. In particular, it prevents undefined expression as $\infty - \infty$. The quantities $t_{T_1}$ and $t_{T_2}$ are the numbers of tied pairs in $T_1$ and $T_2$, respectively. In our setting, ties arise because cured subjects have $T=\infty$. As $n\to\infty$, we have
    \begin{equation}
        \tau_b
        = \tau_a \frac{n(n-1)/2}{\sqrt{n(n-1)/2 - t_{T_1}}\sqrt{n(n-1)/2 - t_{T_2}}}
        \ \longrightarrow \
        \frac{\tau_a}{\sqrt{(1 - p_1^2)(1 - p_2^2)}},
        \label{eq:tau_b}
    \end{equation}
    where $p_1$ and $p_2$ are the marginal cure fractions.
    
    Let $\mathcal{C}_\tau$ and $\mathcal{D}_\tau$ denote the concordant and discordant events, respectively:
    \begin{align*}
      \mathcal{C}_\tau
      &\coloneqq \{T_{k1} > T_{l1}, T_{k2} > T_{l2}\} \cup \{T_{k1} < T_{l1}, T_{k2} < T_{l2}\}, \\
      \mathcal{D}_\tau
      &\coloneqq \{T_{k1} > T_{l1}, T_{k2} < T_{l2}\} \cup \{T_{k1} < T_{l1}, T_{k2} > T_{l2}\}.
    \end{align*}
    Then
    \[ \tau_a = \Prob(\mathcal{C}_\tau) - \Prob(\mathcal{D}_\tau). \]

    Define the following regions: $\text{A} = \{T_1 = \infty,\, T_2 = \infty\}, \text{B} = \{T_1 < \infty,\, T_2 = \infty\}, \text{C} = \{T_1 = \infty,\, T_2 < \infty\}, \text{D} = \{T_1 < \infty,\, T_2 < \infty\}$. There are ten unordered pairs of regions. Pairs involving ties in at least one margin (A--A, A--B, A--C, B--B, C--C) contribute zero because $\operatorname{sign}(T_{kj}, T_{lj})=0$ for at least one $j\in\{1,2\}$.

    For the B--D pair, we always have $T_{k2} > T_{l2}$ since $T_{k2}=\infty$ and $T_{l2}<\infty$. Hence concordance/discordance is determined solely by the ordering of $T_{k1}$ and $T_{l1}$, and
    \[
        \Prob(T_{k1} < T_{l1} \mid \boldsymbol{T}_k \in \text{B}, \boldsymbol{T}_l \in \text{D})
        = \Prob(T_{k1} > T_{l1} \mid \boldsymbol{T}_k \in \text{B}, \boldsymbol{T}_l \in \text{D})
        = \frac{1}{2},
    \]
    because conditional on being finite, the marginal distribution of $T_1$ is continuous. Therefore the net contribution is zero. The same argument applies to the C--D pair.

    All A--D pairs are concordant, and all B--C pairs are discordant. Finally, for D--D pairs the contribution equals Kendall's tau within the uncured subpopulation, denoted by $\tau_{00}$. Combining these contributions yields
    \[
        \tau_a
        = 2 \left(p_{11} p_{00} - p_{01} p_{10}\right) + p_{00}^2 \tau_{00}.
    \]
    Together with \cref{eq:tau_b}, this proves the formula for $\tau_b$. It remains to express $\tau_{00}$ in terms of $C_{\theta,\gamma}^\ast(u,v)$. Let $S_j(t_j) = \mathcal{L}_W\left(-\log S_{0j}(t_j)\right) = \left(1-\gamma \log S_{0j}(t_j)\right)^{-1/\gamma}$ denote the marginal survival function of the uncured subpopulation for margin $j$. Setting $u = S_1(t_1)$ and $v = S_2(t_2)$, we have $S_{01}(t_1) = \exp\left\{-\left(u^{-\gamma}-1\right)/\gamma\right\}$ and $S_{02}(t_2) = \exp\{-(v^{-\gamma}-1)/\gamma\}$. The joint survival function of the uncured subpopulation can then be written as
    \[
      C_{\theta,\gamma}^\ast(u, v)
      = \int_{0}^{\infty} C_\theta \left(\exp\left\{-w \frac{u^{-\gamma} - 1}{\gamma}\right\}, \exp\left\{-w \frac{v^{-\gamma} - 1}{\gamma}\right\}\right) f_\gamma(w) \, \mathrm{d} w.
    \]
    Since $C_{\theta,\gamma}^\ast(u,1) = \mathcal{L}_W\left((u^{-\gamma}-1)/\gamma\right) = u$ and $C_{\theta,\gamma}^\ast(1,v) = v$, the function $C_{\theta,\gamma}^\ast$ satisfies the boundary conditions of a copula. The $2$-increasing property follows from the fact that $C_\theta$ is $2$-increasing and integration preserves this property. Therefore, $C_{\theta,\gamma}^\ast$ is a bivariate copula. By the standard representation of Kendall's tau in terms of a copula, we obtain $\tau_{00} = 4\int_0^1\int_0^1 C_{\theta,\gamma}^\ast(u,v)\,\mathrm{d}C_{\theta,\gamma}^\ast(u,v) - 1$.
\end{proof}

\subsection{Proof of \texorpdfstring{\Cref{thm:spearman}}{Theorem~\ref{thm:spearman}} (Spearman's rho)} \label{app:proof-spearman}
\begin{proof}
    As with Kendall's tau, we first describe the tie correction induced by the cure fractions. Consider three independent subjects $k$, $l$ and $m$ with observations $\boldsymbol{T}_k=(T_{k1},T_{k2})$, $\boldsymbol{T}_l=(T_{l1},T_{l2})$, and $\boldsymbol{T}_m=(T_{m1},T_{m2})$. When there are no ties, we denote Spearman's rho by $\rho_a$, and its tie-adjusted version by $\rho_b$. In the presence of a cure fraction, ties occur through $T=\infty$, and the denominator correction terms are those in \citet{pimentelKendallsTauSpearmans2009}. Hence,
    \begin{equation}
        \rho_b
        = \rho_a \frac{W_0}{\sqrt{(W_0-U_{T_1})(W_0-V_{T_2})}}
        \ \longrightarrow \
        \frac{\rho_a}{\sqrt{(1-p_1^3)(1-p_2^3)}}
        \qquad (n\to\infty),
        \label{eq:rho_b}
    \end{equation}
    where $W_0 = n^3 - n$, $U_{T_1} = (n p_1)^3 - n p_1$, and $V_{T_2} = (n p_2)^3 - n p_2$.
    
    Let $\mathcal{C}_\rho$ and $\mathcal{D}_\rho$ denote the concordant and discordant events:
    \begin{align*}
      \mathcal{C}_\rho
      &\coloneqq \{T_{k1} > T_{l1}, T_{k2} > T_{m2}\} \cup \{T_{k1} < T_{l1}, T_{k2} < T_{m2}\}, \\
      \mathcal{D}_\rho
      &\coloneqq \{T_{k1} > T_{l1}, T_{k2} < T_{m2}\} \cup \{T_{k1} < T_{l1}, T_{k2} > T_{m2}\}.
    \end{align*}
    Since there are three choices for the shared observation among the two comparisons, Spearman's rho without tie correction can be written as
    \[ \rho_a \coloneqq 3 \Prob(\mathcal{C}_\rho) - 3 \Prob(\mathcal{D}_\rho). \]
    
    \begin{table}[htbp]
        \centering
        \caption{Configurations of $(T_{k1},T_{k2})$, $T_{l1}$, and $T_{m2}$}
        \label{tab:spearman_rho}
        \setlength{\tabcolsep}{6pt}
        \renewcommand{\arraystretch}{1.15}
        \begin{tabular}{cccc}
            \toprule
            $(T_{k1},T_{k2})$ & $T_{l1}$ & $T_{m2}$ & $\Prob(\mathcal{C}_\rho)-\Prob(\mathcal{D}_\rho)$ \\
            \midrule
            \multirow{4}{*}{$(T_{k1}<\infty,T_{k2}<\infty)$}
            & $T_{l1}<\infty$ & $T_{m2}<\infty$ & $p_{00}(1-p_1)(1-p_2)\rho_{00}/3$ \\
            & $T_{l1}<\infty$ & $T_{m2}=\infty$ & 0 \\
            & $T_{l1}=\infty$ & $T_{m2}<\infty$ & 0 \\
            & $T_{l1}=\infty$ & $T_{m2}=\infty$ & $p_{00} p_1 p_2$ \\
            \midrule
            \multirow{4}{*}{$(T_{k1}<\infty,T_{k2}=\infty)$}
            & $T_{l1}<\infty$ & $T_{m2}<\infty$ & 0 \\
            & $T_{l1}<\infty$ & $T_{m2}=\infty$ & 0 \\
            & $T_{l1}=\infty$ & $T_{m2}<\infty$ & $- p_{01} p_1 (1 - p_2)$ \\
            & $T_{l1}=\infty$ & $T_{m2}=\infty$ & 0 \\
            \midrule
            \multirow{4}{*}{$(T_{k1}=\infty,T_{k2}<\infty)$}
            & $T_{l1}<\infty$ & $T_{m2}<\infty$ & 0 \\
            & $T_{l1}<\infty$ & $T_{m2}=\infty$ & $- p_{10} (1 - p_1) p_2$ \\
            & $T_{l1}=\infty$ & $T_{m2}<\infty$ & 0 \\
            & $T_{l1}=\infty$ & $T_{m2}=\infty$ & 0 \\
            \midrule
            \multirow{4}{*}{$(T_{k1}=\infty,T_{k2}=\infty)$}
            & $T_{l1}<\infty$ & $T_{m2}<\infty$ & $p_{11} (1 - p_1) (1 - p_2)$ \\
            & $T_{l1}<\infty$ & $T_{m2}=\infty$ & 0 \\
            & $T_{l1}=\infty$ & $T_{m2}<\infty$ & 0 \\
            & $T_{l1}=\infty$ & $T_{m2}=\infty$ & 0 \\
            \bottomrule
        \end{tabular}
    \end{table}
    
    We classify configurations according to whether each of $T_{k1}$, $T_{k2}$, $T_{l1}$, and $T_{m2}$ is finite or infinite. There are $2^4=16$ such configurations, summarized in \Cref{tab:spearman_rho}. For each configuration, the contribution to $\Prob(\mathcal{C}_\rho)-\Prob(\mathcal{D}_\rho)$ is either (i) zero due to a tie or a random ordering with probability $1/2$, or (ii) $\pm 1$ because the ordering is deterministic.

    If $T_{k1}<\infty$,$T_{k2}<\infty$, $T_{l1}<\infty$, $T_{m2}<\infty$, there is no cure-induced tie, so the conditional contribution is $\rho_{00}/3$, with probability $p_{00}(1-p_1)(1-p_2)$. If $T_{k1}<\infty$, $T_{k2}<\infty$, $T_{l1}=\infty$, and $T_{m2}=\infty$, then $T_{k1}<T_{l1}$ and $T_{k2}<T_{m2}$ always hold, so $\mathcal{C}_\rho$ occurs with probability one. This configuration has probability $p_{00}p_1p_2$. The other three cases follow by the same argument.
    
    All remaining configurations contribute zero. For instance, if $T_{k1}<\infty$, $T_{k2}<\infty$, $T_{l1}<\infty$, and $T_{m2}=\infty$,
    then $T_{k2}<T_{m2}$ always holds while $T_{k1}$ and $T_{l1}$ are i.i.d. and continuous conditional on being finite, so
    \[
        \Prob(\mathcal{C}_\rho) - \Prob(\mathcal{D}_\rho)=\Prob(T_{k1}<T_{l1}) - \Prob(T_{k1}>T_{l1})= \frac12 - \frac12 = 0,
    \]
    and the net contribution is zero. Summing the nonzero contributions in \Cref{tab:spearman_rho} yields
    \begin{align*}
      &\Prob(\mathcal{C}_\rho)-\Prob(\mathcal{D}_\rho) \\
      &= p_{00}p_1p_2 - p_{01}p_1(1-p_2) - p_{10}(1-p_1)p_2
      + p_{11}(1-p_1)(1-p_2) + p_{00}(1-p_1)(1-p_2)\frac{\rho_{00}}{3},
    \end{align*}
    and therefore
    \[
        \rho_a
        = 3\left(p_{11}p_{00}-p_{10}p_{01}\right) + p_{00}(1-p_1)(1-p_2)\rho_{00}.
    \]
    Combining this with \cref{eq:rho_b} proves the formula for $\rho_b$. The representation of $\rho_{00}$ follows by the same argument as in the proof of \Cref{thm:kendall}.
\end{proof}

\end{appendices}

\bibliography{ref}

\end{document}


\title{Supplementary Materials for \\ ``A bivariate cure copula model with zero-inflated gamma frailty: dependence in both cure fractions and survival times''}

\author[1]{\fnm{Masaki} \sur{Hino}\orcidicon{0009-0006-7734-6335}}

\author[1,2]{\fnm{Shogo} \sur{Kato}\orcidicon{0000-0003-2648-6769}}

\author*[3]{\fnm{Takeshi} \sur{Emura}\orcidicon{0000-0002-3904-4014}}\email{takeshiemura@gmail.com}

\affil[1]{\orgdiv{Department of Statistical Science}, \orgname{The Graduate University for Advanced Studies, SOKENDAI}, \orgaddress{\street{Shonan village}, \city{Hayama}, \postcode{240-0193}, \state{Kanagawa}, \country{Japan}}}
 
\affil[2]{\orgname{Institute of Statistical Mathematics}, \orgaddress{\street{10-3 Midori-cho}, \city{Tachikawa}, \postcode{190-8562}, \state{Tokyo}, \country{Japan}}}
 
\affil*[3]{\orgdiv{School of Informatics and Data Science}, \orgname{Hiroshima University}, \orgaddress{\street{1-3-2 Kagamiyama}, \city{Higashi-Hiroshima}, \postcode{739-8511}, \state{Hiroshima}, \country{Japan}}}

\abstract{This Online Resource accompanies the article ``A bivariate cure copula model with zero-inflated gamma frailty: Dependence in both cure fractions and survival times.'' It contains a numerical validation of rank correlation coefficients (Section S1), derivations of the log-likelihood functions for the independence and FGM copulas (Section S2), simulation results for the model without covariates (Section S3), and a comparison of the EM algorithm with direct optimization (Section S4).}

\maketitle

\clearpage
\section{Numerical validation of population rank correlation coefficients}
We numerically validate the rank correlation coefficients for the Gumbel copula model derived in Sections 3.1 and 3.2 of the main text. We generate $1,000$ Monte Carlo replications with sample sizes $n \in \{100, 300, 1,000, 3,000\}$. For replication $k$, we generate one dataset using a prespecified seed $s_k$, and compute both the tie-adjusted sample Kendall's tau and Spearman's rho from that same dataset, denoted by $\widehat{\tau}_b^{(k)}$ and $\widehat{\rho}_b^{(k)}$, respectively. The theoretical values $\tau_b$ and $\rho_b$ are computed using the formulas in Sections 3.1 and 3.2 of the main text; $\rho_{\mathrm{BB1}}$ is evaluated numerically.

For each coefficient $\xi\in\{\tau_b,\rho_b\}$, let $\widehat{\xi}^{(k)}$ denote the estimate from replication $k$, and define the Monte Carlo mean by $\overline{\xi} = \sum_{k=1}^{1,000} \widehat{\xi}^{(k)} / 1,000$. We report $\overline{\xi}$ along with a $95\%$ confidence interval (CI) $\overline{\xi} \pm 1.96\,\mathrm{SE}(\overline{\xi})$, where $\mathrm{SE}(\overline{\xi}) = \mathrm{sd}(\widehat{\xi}^{(1)}, \ldots, \widehat{\xi}^{(1,000)})/\sqrt{1,000}$.
In addition to the settings reported below, we also considered parameter configurations yielding true values close to 0 and 1, and obtained similar results (not shown). 

We set the model parameters to $\theta=1$, $\gamma=1$, $p_1=0.4$, $p_2=0.2$, and $\OR=2$. Using the formulas in Sections 3.1 and 3.2 of the main text, the theoretical values are $\tau_{\mathrm{BB1}} \approx 0.667$, $\tau_b \approx 0.252$, $\rho_{\mathrm{BB1}} \approx 0.847$, and $\rho_b \approx 0.299$.

\Cref{tab:tau-rho-validation} reports the Monte Carlo mean and $95\%$ CI for each coefficient across varying sample sizes. If the derived formulas are correct and the sample estimators are consistent, we expect the CIs to cover the theoretical values and to become narrower as $n$ increases.

\begin{table}[ht]
    \centering
    \caption{Monte Carlo means and $95\%$ CIs for Kendall's $\tau_b$ and Spearman's $\rho_b$, based on $1{,}000$ replications, and the theoretical values are $0.252$ for Kendall's $\tau_b$ and $0.299$ for Spearman's $\rho_b$}
    \label{tab:tau-rho-validation}
    \begin{tabular}{c c c c c}
        \toprule
        & \multicolumn{2}{c}{Kendall's $\tau_b$} & \multicolumn{2}{c}{Spearman's $\rho_b$} \\
        \cmidrule(lr){2-3} \cmidrule(lr){4-5}
        $n$ & mean & $95\%$ CI & mean & 95\% CI \\
        \midrule
        100   & 0.251 & [0.245, 0.256] & 0.303 & [0.296, 0.309] \\
        300   & 0.249 & [0.246, 0.252] & 0.300 & [0.296, 0.304] \\
        1,000  & 0.252 & [0.250, 0.254] & 0.300 & [0.298, 0.302] \\
        3,000  & 0.251 & [0.250, 0.252] & 0.299 & [0.297, 0.300] \\
        \bottomrule
    \end{tabular}
\end{table}
As shown in \Cref{tab:tau-rho-validation}, the CIs cover the theoretical values for all sample sizes and become narrower as $n$ increases, confirming the validity of the derived formulas and the consistency of the sample rank correlation coefficients.

\section{Derivation of the log-likelihood function}
\label{supp:loglik}
This appendix provides the log-likelihood functions for the independence and FGM copula models with covariates. We use the same notation as in Section 4 of the main text: $d_j = \sum_{i=1}^n \delta_{ij}$ for the number of observed events for margin $j$ ($j=1,2$) and $d_{12} = \sum_{i=1}^n \delta_{i1} \delta_{i2}$ for the number of subjects with observed events for both margins.

\subsection{Log-likelihood for the independence copula with Weibull margins}
We denote the vector of all parameters by $\boldsymbol{\Psi} = (\gamma, \boldsymbol{\beta}_1, \boldsymbol{\beta}_2, \OR, a_1, r_1, a_2, r_2)$. To simplify the notation, define $B_{ij}$ as follows:
\begin{align*}
    B_{i1}
    &\coloneqq B_{i1}(\boldsymbol{\Psi})
    = p_{01, i} \left(1+\gamma r_1 t_{i1}^{a_1}\right)^{-(1/\gamma+1)}
    + p_{00, i} \left\{1 + \gamma (r_1 t_{i1}^{a_1} + r_2 t_{i2}^{a_2})\right\}^{-(1/\gamma + 1)}, \\
    B_{i2}
    &\coloneqq B_{i2}(\boldsymbol{\Psi})
    = p_{10, i} \left(1+\gamma r_2 t_{i2}^{a_2}\right)^{-(1/\gamma+1)}
    + p_{00, i} \left\{1 + \gamma (r_1 t_{i1}^{a_1} + r_2 t_{i2}^{a_2})\right\}^{-(1/\gamma + 1)}.
\end{align*}
Then, the log-likelihood function is expressed as
\begin{align*}
    \ell(\boldsymbol{\Psi})
    &= d_{12} \log(1 + \gamma) + \sum_{j=1}^{2} d_j \left(\log r_j + \log a_j\right)
    + \sum_{i=1}^n (1-\delta_{i1})(1-\delta_{i2}) \log S(t_{i1}, t_{i2}) \\
    &\quad + \sum_{i=1}^n \delta_{i1} (a_1 - 1) \log t_{i1}
    + \sum_{i=1}^n \delta_{i2} (a_2 - 1) \log t_{i2} \\
    &\quad + \sum_{i=1}^{n} \delta_{i1} (1 - \delta_{i2}) \log B_{i1}
    + \sum_{i=1}^{n} (1 - \delta_{i1}) \delta_{i2} \log B_{i2} \\
    &\quad + \sum_{i=1}^{n} \delta_{i1} \delta_{i2} \left[\log p_{00, i} - \left(\frac{1}{\gamma} + 2\right) \log\left\{1 + \gamma (r_1 t_{i1}^{a_1} + r_2 t_{i2}^{a_2})\right\}\right].
\end{align*}

\subsection{Log-likelihood for the FGM copula with Weibull margins}
We denote the vector of all parameters by $\boldsymbol{\Psi} = (\theta, \gamma, \boldsymbol{\beta}_1, \boldsymbol{\beta}_2, \OR, a_1, r_1, a_2, r_2)$. To simplify the notation, let $q_{i1} = 1 + \gamma r_1 t_{i1}^{a_1}$ and $q_{i2} = 1 + \gamma r_2 t_{i2}^{a_2}$, and define
\begin{align*}
    &D_{i1}
    \coloneqq D_{i1}(\boldsymbol{\Psi}) \\
    &= p_{01, i} q_{i1}^{-(1/\gamma+1)}
    + p_{00, i} \left\{(1+\theta)\left(q_{i1} + q_{i2} - 1\right)^{-(1/\gamma + 1)}
    - 2 \theta \left(2 q_{i1} + q_{i2} - 2\right)^{-(1/\gamma + 1)}
    \right. \\
    & \quad \left.
    - \theta \left(q_{i1} + 2 q_{i2} - 2\right)^{-(1/\gamma + 1)}
    + 2 \theta \left(2q_{i1} + 2q_{i2} - 3\right)^{-(1/\gamma + 1)}\right\}, \\
    &D_{i2}
    \coloneqq D_{i2}(\boldsymbol{\Psi}) \\
    &= p_{10, i} q_{i2}^{-(1/\gamma+1)}
    + p_{00, i} \left\{(1+\theta)\left(q_{i1} + q_{i2} - 1\right)^{-(1/\gamma + 1)}
    - \theta \left(2 q_{i1} + q_{i2} - 2\right)^{-(1/\gamma + 1)}
    \right. \\
    &\quad \left.
    - 2 \theta \left(q_{i1} + 2 q_{i2} - 2\right)^{-(1/\gamma + 1)}
    + 2 \theta \left(2q_{i1} + 2q_{i2} - 3\right)^{-(1/\gamma + 1)}\right\}, \\
    &E_i
    \coloneqq E_i (\boldsymbol{\Psi}) \\
    & = (1 + \gamma) \left\{(1 + \theta) \left(q_{i1} + q_{i2} - 1\right)^{-(1/\gamma + 2)}
    - 2 \theta \left(2 q_{i1} + q_{i2} - 2\right)^{-(1/\gamma + 2)} \right. \\
    &\left. \quad
    - 2 \theta \left(q_{i1} + 2 q_{i2} - 2\right)^{-(1/\gamma + 2)}
    + 4 \theta \left(2q_{i1} + 2q_{i2} - 3\right)^{-(1/\gamma + 2)}\right\}.
\end{align*}
Then, the log-likelihood function is expressed as
\begin{align*}
    \ell(\boldsymbol{\Psi})
    &= d_{12} \log(1 + \gamma)
    + \sum_{j=1}^{2} d_j \left(\log r_j + \log a_j\right)
    + \sum_{i=1}^n (1-\delta_{i1})(1-\delta_{i2}) \log S(t_{i1}, t_{i2}) \\
    &\quad + \sum_{i=1}^{n} \delta_{i1} \delta_{i2} \left(\log p_{00, i} + \log E_i\right)
    + \sum_{i=1}^n \delta_{i1} \left(a_1 - 1\right) \log t_{i1}
    + \sum_{i=1}^n \delta_{i2}\,\left(a_2 - 1\right) \log t_{i2} \\
    &\quad + \sum_{i=1}^{n} \delta_{i1} (1 - \delta_{i2}) \log D_{i1}
    + \sum_{i=1}^{n} (1 - \delta_{i1}) \delta_{i2} \log D_{i2}.
\end{align*}

\section{Simulation study without covariates}
\label{supp:sim}
In this appendix, we present a simulation study to evaluate the finite-sample performance of the proposed model without covariates. We assess the accuracy of parameter estimation under different dependence structures.

\subsection{Simulation design}
We generate synthetic datasets from the proposed model. Since covariates are not incorporated, the parameter vector is $\boldsymbol{\Psi}^\ast = (\theta, \gamma, p_1, p_2, \OR, a_1, r_1, a_2, r_2)$; see Section 4 of the main text. The algorithm to generate pairs of survival times and censoring indicators $\{t_{i1}, t_{i2}, \delta_{i1}, \delta_{i2}\}_{i=1}^n$ proceeds as follows:
\begin{enumerate}
  \item Draw cure indicators $(X_{i1}, X_{i2})$:
  \subitem If $\OR = 1$, draw $X_{i1} \sim \mathrm{Bernoulli}(p_1)$ and $X_{i2} \sim \mathrm{Bernoulli}(p_2)$ independently.
  \subitem If $\OR \neq 1$, draw $(X_{i1},X_{i2})$ from a multinomial distribution over $\{(1,1),(1,0),(0,1),(0,0)\}$ with cell probabilities $(p_{11},p_{10},p_{01},p_{00})$.
  \subitem If $\OR=\infty$, draw $X_{i1}\sim\mathrm{Bernoulli}(p)$ and set $X_{i2} = X_{i1}$.
  \item Draw $W_i \sim \mathrm{Gamma}(1/\gamma, \gamma)$ and $Z_{ij} = (1 - X_{ij}) W_i$.
  \item Draw $(U_{i1}, U_{i2})$ from the Gumbel copula with parameter $\theta$.
  \item Define event times by inverting the conditional survival function:
  \[
  T_{ij} =
  \begin{cases}
      \left(- \dfrac{\log U_{ij}}{r_j Z_{ij}}\right)^{1/a_j} & \text{if } Z_{ij} > 0, \\
      \infty & \text{if } Z_{ij} = 0.
  \end{cases}
  \]
  \item Draw $C_i \sim \mathrm{Uniform}(0, 6)$ and set $t_{ij} = \min(T_{ij}, C_i)$, $\delta_{ij} = I(T_{ij} \leq C_i)$.
\end{enumerate}

We consider two parameter settings with sample sizes $n \in \{200, 400\}$. Setting A corresponds to positive dependence between cure indicators ($\OR = 2$), with
\begin{equation*}
  (\theta, \gamma, p_1, p_2, \OR, a_1, r_1, a_2, r_2)
  = (2, 0.5, 0.6, 0.4, 2, 1, 1.5, 1, 2),
\end{equation*}
whereas Setting B represents negative cure dependence ($\OR = 0.5$), with
\begin{equation*}
  (\theta, \gamma, p_1, p_2, \OR, a_1, r_1, a_2, r_2)
  = (0.5, 0.5, 0.6, 0.4, 0.5, 1, 1.5, 1, 2).
\end{equation*}
In Settings A and B, the values of $\theta$ and $\OR$ differ. Under this censoring mechanism and the parameter settings above, the average censoring rates are approximately $67\%$ for margin 1 and $49\%$ for margin 2 in both settings. Maximum likelihood estimation is performed using the \texttt{optim} function in \textsf{R}. For each scenario, we conduct $1,000$ Monte Carlo replications.

\subsection{Results}
For each parameter, we summarize the mean estimate, absolute bias, mean squared error (MSE), standard deviation (SD), average standard error (SE mean), and coverage probability (CP) of the $95\%$ CIs.
\begin{table}[htbp]
  \centering
  \caption{Estimation performance under Settings A and B for sample sizes $n=200$ and $n=400$}
  \label{tab:settings_ab}
  \normalsize
  \begin{tabular}{c c c c c c c c c}
    \multicolumn{9}{c}{Setting A: $\theta = 2$, $\OR = 2$; $n=200,400$} \\
    \midrule
    {} & Parameter & {True} & {Mean} & {Bias} & {MSE} & {SD} & {SE mean} & {CP} \\
    \midrule
    \multirow[t]{9}{*}{$n=200$}
    & $\theta$   & 2.0 & 2.047 & 0.047 & 0.351 & 0.591 & 0.581 & 0.953 \\
    & $\gamma$   & 0.5 & 0.679 & 0.179 & 0.472 & 0.664 & 0.588 & 0.923 \\
    & $p_1$      & 0.6 & 0.592 & $-$0.008 & 0.003 & 0.050 & 0.049 & 0.947 \\
    & $p_2$      & 0.4 & 0.390 & $-$0.010 & 0.003 & 0.054 & 0.056 & 0.947 \\
    & $\OR$      & 2.0 & 2.060 & 0.060 & 0.758 & 0.869 & 0.867 & 0.966 \\
    & $a_1$      & 1.0 & 1.048 & 0.048 & 0.033 & 0.176 & 0.161 & 0.911 \\
    & $r_1$      & 1.5 & 1.718 & 0.218 & 0.717 & 0.819 & 0.601 & 0.899 \\
    & $a_2$      & 1.0 & 1.049 & 0.049 & 0.032 & 0.171 & 0.156 & 0.911 \\
    & $r_2$      & 2.0 & 2.382 & 0.382 & 2.162 & 1.420 & 0.952 & 0.892 \\
    \midrule
    \multirow[t]{9}{*}{$n=400$}
    & $\theta$   & 2.0 & 2.023 & 0.023 & 0.178 & 0.421 & 0.414 & 0.943 \\
    & $\gamma$   & 0.5 & 0.605 & 0.105 & 0.215 & 0.452 & 0.420 & 0.952 \\
    & $p_1$      & 0.6 & 0.596 & $-$0.004 & 0.001 & 0.033 & 0.035 & 0.963 \\
    & $p_2$      & 0.4 & 0.392 & $-$0.008 & 0.002 & 0.040 & 0.039 & 0.950 \\
    & $\OR$      & 2.0 & 1.991 & $-$0.009 & 0.362 & 0.602 & 0.596 & 0.960 \\
    & $a_1$      & 1.0 & 1.025 & 0.025 & 0.015 & 0.119 & 0.113 & 0.933 \\
    & $r_1$      & 1.5 & 1.596 & 0.096 & 0.167 & 0.397 & 0.356 & 0.918 \\
    & $a_2$      & 1.0 & 1.026 & 0.026 & 0.014 & 0.116 & 0.110 & 0.920 \\
    & $r_2$      & 2.0 & 2.150 & 0.150 & 0.380 & 0.598 & 0.534 & 0.907 \\
    \midrule
    \addlinespace[0.8em]
    \multicolumn{9}{c}{Setting B: $\theta = 0.5$, $\OR = 0.5$; $n=200,400$} \\
    \midrule
    {} & Parameter & {True} & {Mean} & {Bias} & {MSE} & {SD} & {SE mean} & {CP} \\
    \midrule
    \multirow[t]{9}{*}{$n=200$}
    & $\theta$   & 0.5 & 0.532 & 0.032 & 0.101 & 0.316 & 0.290 & 0.939 \\
    & $\gamma$   & 0.5 & 0.681 & 0.181 & 0.491 & 0.677 & 0.565 & 0.919 \\
    & $p_1$      & 0.6 & 0.592 & $-$0.008 & 0.003 & 0.051 & 0.052 & 0.944 \\
    & $p_2$      & 0.4 & 0.384 & $-$0.016 & 0.003 & 0.057 & 0.057 & 0.947 \\
    & $\OR$      & 0.5 & 0.485 & $-$0.015 & 0.052 & 0.227 & 0.229 & 0.988 \\
    & $a_1$      & 1.0 & 1.046 & 0.046 & 0.033 & 0.177 & 0.163 & 0.937 \\
    & $r_1$      & 1.5 & 1.712 & 0.212 & 0.671 & 0.791 & 0.591 & 0.931 \\
    & $a_2$      & 1.0 & 1.040 & 0.040 & 0.028 & 0.163 & 0.150 & 0.918 \\
    & $r_2$      & 2.0 & 2.288 & 0.288 & 1.092 & 1.005 & 0.817 & 0.912 \\
    \midrule
    \multirow[t]{9}{*}{$n=400$}
    & $\theta$   & 0.5 & 0.533 & 0.033 & 0.045 & 0.209 & 0.209 & 0.943 \\
    & $\gamma$   & 0.5 & 0.555 & 0.055 & 0.171 & 0.410 & 0.373 & 0.963 \\
    & $p_1$      & 0.6 & 0.596 & $-$0.004 & 0.001 & 0.033 & 0.035 & 0.972 \\
    & $p_2$      & 0.4 & 0.395 & $-$0.005 & 0.002 & 0.039 & 0.038 & 0.965 \\
    & $\OR$      & 0.5 & 0.494 & $-$0.006 & 0.025 & 0.159 & 0.158 & 0.989 \\
    & $a_1$      & 1.0 & 1.015 & 0.015 & 0.013 & 0.114 & 0.112 & 0.951 \\
    & $r_1$      & 1.5 & 1.566 & 0.066 & 0.162 & 0.397 & 0.342 & 0.931 \\
    & $a_2$      & 1.0 & 1.012 & 0.012 & 0.012 & 0.109 & 0.104 & 0.933 \\
    & $r_2$      & 2.0 & 2.080 & 0.080 & 0.264 & 0.508 & 0.474 & 0.936 \\
    \bottomrule
  \end{tabular}
\end{table}

\Cref{tab:settings_ab} summarize the estimation results for Settings A and B, respectively. The overall patterns are similar across the two settings. As the sample size increases from $n = 200$ to $n = 400$, bias and standard deviation generally decrease, and the coverage probabilities approach the nominal $95\%$ level.

The cure fractions $(p_1, p_2)$ are estimated accurately even at $n = 200$, with small bias and coverage probabilities close to $95\%$. The odds ratio $\OR$ and the copula parameter $\theta$ are estimated with reasonable accuracy, while the frailty parameter $\gamma$ and the Weibull scale parameters $(r_1,r_2)$ show relatively larger bias and variability, especially in the smaller sample.

Overall, the proposed estimation procedure performs reasonably well for moderate sample sizes, with clear improvement when the sample size increases to $n=400$.

\clearpage
\section{EM algorithm versus direct optimization}
\label{supp:em_vs_direct}
In frailty models, the frailty is typically treated as latent (unobserved) data, and parameter estimation is often carried out via the EM algorithm \citep{duchateauSharedFrailtyModel2002,kleinSurvivalAnalysisTechniques2005,duchateauFrailtyModel2008}. In this appendix, we develop an EM algorithm for the Gumbel copula model with Weibull marginals introduced in Section 2.1 of the main text, and compare its performance with the direct optimization approach in Section 4.1 of the main text. In contrast to the standard frailty setting, we treat the cure indicators $X_{ij}$ ($i=1,\ldots,n;\, j=1,2$) as the latent (unobserved) data in the EM algorithm.

Throughout this appendix, we consider the model without covariates and restrict attention to the case $\OR=1$ for simplicity. The parameter vector is accordingly $\boldsymbol{\Psi}^\ast = (\theta, \gamma, p_1, p_2, a_1, r_1, a_2, r_2)$. The extension to the case with covariates or $\OR\neq 1$ is straightforward, though the notation becomes more involved.

\subsection{Complete-data log-likelihood}
For subject $i$, the observed data consist of the survival times and event indicators for the two margins:
\begin{align*}
    T_{ij} &: \text{observed survival time for margin } j, \\
    &\text{i.e., the minimum of the true event time and the censoring time}, \\
    \delta_{ij} &=
    \begin{cases}
        1, & \text{if the event is observed},\\
        0, & \text{if the observation is censored},
    \end{cases}
    \quad j=1,2.
\end{align*}

Let $(T_{i1},T_{i2})$ and $(X_{i1},X_{i2})$ denote random pairs, with realizations $(t_{i1},t_{i2})$ and $(x_{i1},x_{i2})$, respectively. Define
\begin{align*}
    &S(t_{i1},t_{i2},x_{i1},x_{i2}) \\
    &\coloneqq \Prob\left(T_{i1}>t_{i1},\,T_{i2}>t_{i2},\,X_{i1}=x_{i1},\,X_{i2}=x_{i2}\right) \\
    &=
    (p_1p_2)^{x_{i1}x_{i2}}
    \left\{(1-p_1)p_2(1+\gamma r_1 t_{i1}^{a_1})^{-1/\gamma}\right\}^{(1-x_{i1})x_{i2}} \\
    &\quad\times
    \left\{p_1(1-p_2)(1+\gamma r_2 t_{i2}^{a_2})^{-1/\gamma}\right\}^{x_{i1}(1-x_{i2})}
    \times \left\{(1-p_1)(1-p_2)A_i^{-1/\gamma}\right\}^{(1-x_{i1})(1-x_{i2})},
\end{align*}
where
\[
    A_i \coloneqq
    1+\gamma\left\{\left(r_1 t_{i1}^{a_1}\right)^{\theta+1}
    +\left(r_2 t_{i2}^{a_2}\right)^{\theta+1}\right\}^{1/(\theta+1)}.
\]

We adopt the same notation as in Section 4.1 of the main text, except that the odds-ratio parameter is absent here because we assume $\OR=1$. Accordingly, the full parameter vector is $\boldsymbol{\Psi} = (\theta, \gamma, p_1, p_2, a_1, r_1, a_2, r_2)$. Let $\ell_c(\boldsymbol{\Psi})$ denote the complete-data log-likelihood based on $\{(T_{i1},T_{i2},\delta_{i1},\delta_{i2},X_{i1},X_{i2})\}_{i=1}^n$.

In the E-step of the EM algorithm, given the current parameter estimate $\boldsymbol{\Psi}^{(k)}$ at iteration $k$, we evaluate the Q-function
\begin{equation}
    Q(\boldsymbol{\Psi}\mid \boldsymbol{\Psi}^{(k)})
    \coloneqq
    \E_{\boldsymbol{\Psi}^{(k)}}\left[
        \ell_c(\boldsymbol{\Psi})
        \,\middle|\,
        \{(T_{i1},T_{i2},\delta_{i1},\delta_{i2})\}_{i=1}^n
    \right].
    \label{eq:q-function}
\end{equation}
Here, the expectation is taken with respect to the conditional distribution of the latent variables given the observed data under the current parameter value $\boldsymbol{\Psi}^{(k)}$. For notational simplicity, we suppress the subscript $\boldsymbol{\Psi}^{(k)}$ in what follows, and all conditional expectations are understood to be taken under the current parameter value $\boldsymbol{\Psi}^{(k)}$. Expanding \cref{eq:q-function} according to the four possible observation patterns yields
\begin{align}
    Q(\boldsymbol{\Psi}\mid \boldsymbol{\Psi}^{(k)})
    &=
    \sum_{i=1}^n (1-\delta_{i1})(1-\delta_{i2})
    \E\left[
        \log S(T_{i1},T_{i2},X_{i1},X_{i2})
        \,\middle|\,
        T_{i1},T_{i2},\delta_{i1},\delta_{i2}
    \right] \notag\\
    &\quad+
    \sum_{i=1}^n \delta_{i1}(1-\delta_{i2})
    \E\left[
        \log\left\{
            -\frac{\partial}{\partial t_{i1}}
            S(T_{i1},T_{i2},X_{i1},X_{i2})
        \right\}
        \,\middle|\,
        T_{i1},T_{i2},\delta_{i1},\delta_{i2}
    \right] \notag\\
    &\quad+
    \sum_{i=1}^n (1-\delta_{i1})\delta_{i2}
    \E\left[
        \log\left\{
            -\frac{\partial}{\partial t_{i2}}
            S(T_{i1},T_{i2},X_{i1},X_{i2})
        \right\}
        \,\middle|\,
        T_{i1},T_{i2},\delta_{i1},\delta_{i2}
    \right] \notag\\
    &\quad+
    \sum_{i=1}^n \delta_{i1}\delta_{i2}
    \E\left[
        \log\left\{
            \frac{\partial^2}{\partial t_{i1}\partial t_{i2}}
            S(T_{i1},T_{i2},X_{i1},X_{i2})
        \right\}
        \,\middle|\,
        T_{i1},T_{i2},\delta_{i1},\delta_{i2}
    \right].
    \label{eq:complete-exp}
\end{align}

To evaluate \cref{eq:complete-exp}, we compute the relevant conditional expectations for each of the four observation patterns. For $(\delta_{i1},\delta_{i2})\in\{0,1\}^2$, define
\begin{align*}
  \hat{x}_{ij}^{(\delta_{i1},\delta_{i2})}
  &\coloneqq
  \E\left[X_{ij}\mid \text{observed data under }(\delta_{i1},\delta_{i2})\right],
  \quad j=1,2, \\
  \widehat{x_{i1}x_{i2}}^{(\delta_{i1},\delta_{i2})}
  &\coloneqq
  \E\left[X_{i1}X_{i2}\mid \text{observed data under }(\delta_{i1},\delta_{i2})\right].
\end{align*}

When both margins are censored, we have
\begin{align*}
    \hat{x}_{i1}^{(0,0)}
    &= \E[X_{i1}\mid T_{i1}>t_{i1},\,T_{i2}>t_{i2}]
    = \sum_{x_{i1}=0}^1 x_{i1}\,
    \Prob(X_{i1}=x_{i1}\mid T_{i1}>t_{i1},\,T_{i2}>t_{i2}) \\
    &= \frac{p_1p_2+p_1(1-p_2)(1+\gamma r_2 t_{i2}^{a_2})^{-1/\gamma}}{S(t_{i1}, t_{i2})},
\end{align*}
where $S(t_{i1}, t_{i2}) = p_1p_2+(1-p_1)p_2(1+\gamma r_1 t_{i1}^{a_1})^{-1/\gamma}+p_1(1-p_2)(1+\gamma r_2 t_{i2}^{a_2})^{-1/\gamma}+(1-p_1)(1-p_2)A_i^{-1/\gamma}$ and similarly,
\[
    \hat{x}_{i2}^{(0,0)}
    = \frac{p_1p_2+(1-p_1)p_2(1+\gamma r_1 t_{i1}^{a_1})^{-1/\gamma}}{S(t_{i1}, t_{i2})},
    \qquad
    \widehat{x_{i1}x_{i2}}^{(0,0)}
    = \frac{p_1p_2}{S(t_{i1}, t_{i2})}.
\]
Next, consider the cases in which exactly one margin is uncensored. If an event is observed in one margin, then the corresponding cure indicator must be zero. Therefore, in the case $(\delta_{i1},\delta_{i2})=(0,1)$, we have $\hat{x}_{i2}^{(0,1)}=\widehat{x_{i1}x_{i2}}^{(0,1)}=0$, and, by symmetry, in the case $(\delta_{i1},\delta_{i2})=(1,0)$, we have $\hat{x}_{i1}^{(1,0)}=\widehat{x_{i1}x_{i2}}^{(1,0)}=0$. The remaining conditional expectations, $\hat{x}_{i1}^{(0,1)}$ and $\hat{x}_{i2}^{(1,0)}$, are obtained analogously by Bayes' theorem.

Finally, when events are observed in both margins, we have $Z_{i1}\neq 0$ and $Z_{i2}\neq 0$, which implies $X_{i1}=X_{i2}=0$. Therefore, $\hat{x}_{i1}^{(1,1)}=\hat{x}_{i2}^{(1,1)}=\widehat{x_{i1}x_{i2}}^{(1,1)}=0$.

\begin{table}[htbp]
    \centering
    \caption{Conditional expectations required in the E-step under each observation pattern, where $(0,0)$, $(0,1)$, $(1,0)$, and $(1,1)$ denote the realizations of $(\delta_{i1},\delta_{i2})$}
    \renewcommand{\arraystretch}{1.5}
    \begin{tabular}{c c c c c}
        \hline
        & $(0,0)$ & $(0,1)$ & $(1,0)$ & $(1,1)$ \\
        \hline
        $\E[X_{i1}\mid \text{data}]$
        & $\hat{x}_{i1}^{(0,0)}$ & $\hat{x}_{i1}^{(0,1)}$ & $0$ & $0$ \\
        $\E[X_{i2}\mid \text{data}]$
        & $\hat{x}_{i2}^{(0,0)}$ & $0$ & $\hat{x}_{i2}^{(1,0)}$ & $0$ \\
        $\E[X_{i1}X_{i2}\mid \text{data}]$
        & $\widehat{x_{i1}x_{i2}}^{(0,0)}$ & $0$ & $0$ & $0$ \\
        \hline
    \end{tabular}
    \label{tab:estep-summary}
\end{table}
\Cref{tab:estep-summary} summarizes the conditional expectations required in the E-step. In particular, the only nonzero conditional expectations are
$\hat{x}_{i1}^{(0,0)}$, $\hat{x}_{i2}^{(0,0)}$, $\widehat{x_{i1}x_{i2}}^{(0,0)}$, $\hat{x}_{i1}^{(0,1)}$, and $\hat{x}_{i2}^{(1,0)}$; all others are zero. Hence,
\begin{align*}
    &\E\left[\log S(T_{i1},T_{i2},X_{i1},X_{i2})
    \mid T_{i1}>t_{i1},\,T_{i2}>t_{i2}\right] \\
    &= \hat{x}_{i1}^{(0,0)}\log p_1
    + \hat{x}_{i2}^{(0,0)}\log p_2
    + \left(1-\hat{x}_{i1}^{(0,0)}\right)\log(1-p_1)
    + \left(1-\hat{x}_{i2}^{(0,0)}\right)\log(1-p_2) \\
    &\quad - \frac{\hat{x}_{i2}^{(0,0)}-\widehat{x_{i1}x_{i2}}^{(0,0)}}{\gamma}
    \log(1+\gamma r_1 t_{i1}^{a_1})
    - \frac{\hat{x}_{i1}^{(0,0)}-\widehat{x_{i1}x_{i2}}^{(0,0)}}{\gamma}
    \log(1+\gamma r_2 t_{i2}^{a_2}) \\
    &\quad - \frac{1-\hat{x}_{i1}^{(0,0)}-\hat{x}_{i2}^{(0,0)}+\widehat{x_{i1}x_{i2}}^{(0,0)}}{\gamma} \log A_i.
\end{align*}
Similarly, the remaining terms in \cref{eq:complete-exp} can be evaluated using
$\hat{x}_{i1}^{(0,1)}$, $\hat{x}_{i2}^{(1,0)}$, and the fact that all other relevant conditional expectations are zero.

\subsection{EM algorithm}

We outline the EM algorithm for estimating the parameter vector $\boldsymbol{\Psi} = (\theta, \gamma, p_1, p_2, a_1, r_1, a_2, r_2)$.

\begin{itemize}
    \item[Step 0.] Choose an initial value
    \[
        \boldsymbol{\Psi}^{(0)}
        = (\theta^{(0)}, \gamma^{(0)}, p_1^{(0)}, p_2^{(0)}, a_1^{(0)}, r_1^{(0)}, a_2^{(0)}, r_2^{(0)}).
    \]
    \item[Step E.] Given the current parameter estimate $\boldsymbol{\Psi}^{(k)}$, compute the conditional expectations of the latent variables for each subject $i$ based on the observed data $(T_{i1}, T_{i2}, \delta_{i1}, \delta_{i2})$, and evaluate the Q-function
    \[
        Q(\boldsymbol{\Psi}\mid \boldsymbol{\Psi}^{(k)})
        = \E\left[\ell_c(\boldsymbol{\Psi})
        \,\middle|\, \{(T_{i1},T_{i2},\delta_{i1},\delta_{i2})\}_{i=1}^n\right].
    \]

    \item[Step M.] Maximize $Q(\boldsymbol{\Psi}\mid \boldsymbol{\Psi}^{(k)})$ with respect to $\boldsymbol{\Psi}$ to obtain the updated estimate $\boldsymbol{\Psi}^{(k+1)}$. If $\|\boldsymbol{\Psi}^{(k+1)} - \boldsymbol{\Psi}^{(k)}\| \leq \varepsilon$, where $\varepsilon$ is a prespecified tolerance, stop; otherwise, set $k \leftarrow k+1$ and return to Step E.
\end{itemize}

\subsection{Numerical analysis}
We conduct a numerical study to compare the EM algorithm described above with the direct optimization approach in Section 4.1 of the main text. Throughout, we restrict attention to the case $\OR=1$, that is, independent cure indicators. Data are generated under the Gumbel copula model with Weibull marginals (see Section 2.1 of the main text) with sample size $n=200$ and $1,000$ Monte Carlo replications. Two settings are considered: Setting~A with $(\theta,\gamma,p_1,p_2,a_1,r_1,a_2,r_2)=(2,0.5,0.6,0.4,1,1.5,1,2)$ and Setting~B with $\theta=0.5$ and all other parameters unchanged. The two methods are compared in terms of estimation accuracy, attained maximized log-likelihood and computational time. Estimation accuracy is assessed using the bias, standard deviation (SD), and mean squared error (MSE) of the parameter estimates.

\begin{table}[ht]
    \centering
    \caption{Bias, SD, and MSE of the parameter estimates under the EM algorithm and direct optimization, based on $1,000$ Monte Carlo replications with sample size $n=200$ and the upper and lower panels correspond to Setting~A ($\theta=2$) and Setting~B ($\theta=0.5$), respectively}
    \label{tab:bias-sd-mse}
    \renewcommand{\arraystretch}{1.15}
    \setlength{\tabcolsep}{5pt}
    \begin{tabular}{c c cccc cccc}
        \toprule
        & & \multicolumn{4}{c}{EM algorithm} & \multicolumn{4}{c}{Direct optimization} \\
        \cmidrule(lr){3-6} \cmidrule(lr){7-10}
        Parameter & True & Mean & Bias & SD & MSE & Mean & Bias & SD & MSE \\
        \midrule
        $\theta$ & 2.0 & 2.095 & 0.095 & 0.622 & 0.396 & 2.098 & 0.098 & 0.623 & 0.398 \\
        $\gamma$ & 0.5 & 0.622 & 0.122 & 0.521 & 0.286 & 0.617 & 0.117 & 0.521 & 0.285 \\
        $p_1$    & 0.6 & 0.596 & $-$0.004 & 0.045 & 0.002 & 0.596 & $-$0.004 & 0.045 & 0.002 \\
        $p_2$    & 0.4 & 0.395 & $-$0.005 & 0.048 & 0.002 & 0.395 & $-$0.005 & 0.048 & 0.002 \\
        $a_1$    & 1.0 & 1.038 & 0.038 & 0.159 & 0.027 & 1.037 & 0.037 & 0.159 & 0.027 \\
        $r_1$    & 1.5 & 1.678 & 0.178 & 0.590 & 0.379 & 1.675 & 0.175 & 0.590 & 0.378 \\
        $a_2$    & 1.0 & 1.035 & 0.035 & 0.154 & 0.025 & 1.034 & 0.034 & 0.154 & 0.025 \\
        $r_2$    & 2.0 & 2.283 & 0.283 & 0.932 & 0.948 & 2.278 & 0.278 & 0.931 & 0.942 \\
        \midrule
        $\theta$ & 0.5 & 0.539 & 0.039 & 0.300 & 0.091 & 0.541 & 0.041 & 0.299 & 0.091 \\
        $\gamma$ & 0.5 & 0.606 & 0.106 & 0.522 & 0.284 & 0.601 & 0.101 & 0.523 & 0.284 \\
        $p_1$    & 0.6 & 0.595 & $-$0.005 & 0.048 & 0.002 & 0.595 & $-$0.005 & 0.048 & 0.002 \\
        $p_2$    & 0.4 & 0.395 & $-$0.005 & 0.050 & 0.003 & 0.395 & $-$0.005 & 0.050 & 0.003 \\
        $a_1$    & 1.0 & 1.040 & 0.040 & 0.172 & 0.031 & 1.038 & 0.038 & 0.171 & 0.031 \\
        $r_1$    & 1.5 & 1.674 & 0.174 & 0.605 & 0.396 & 1.669 & 0.169 & 0.598 & 0.386 \\
        $a_2$    & 1.0 & 1.031 & 0.031 & 0.149 & 0.023 & 1.029 & 0.029 & 0.149 & 0.023 \\
        $r_2$    & 2.0 & 2.233 & 0.233 & 0.821 & 0.727 & 2.225 & 0.225 & 0.811 & 0.707 \\
        \bottomrule
    \end{tabular}
\end{table}

\Cref{tab:bias-sd-mse} reports the bias, SD, and MSE of the parameter estimates under the two methods. Both methods converge successfully in all $1,000$ replications under both settings. The resulting estimates are nearly identical: the bias, SD, and MSE agree to the third decimal place for most parameters, confirming that the EM algorithm and direct optimization reach essentially the same maximum of the observed-data log-likelihood. Indeed, the maximum absolute difference in attained log-likelihood across the $1,000$ replications is $0.24$ under Setting~A and $0.90$ under Setting~B, both negligible relative to the typical log-likelihood values of approximately $-300$.

\begin{table}[ht]
    \centering
    \caption{Median CPU time (in seconds) and median number of iterations to convergence under the EM algorithm and direct optimization for sample size $n=200$, based on $1,000$ Monte Carlo replications}
    \label{tab:cputime-iter}
    \renewcommand{\arraystretch}{1.15}
    \setlength{\tabcolsep}{8pt}
    \begin{tabular}{clcc}
        \toprule
        Setting & Method & Median CPU time & Median iterations \\
        \midrule
        A ($\theta=2$)   & EM algorithm        & 1.458 & 26 \\
                         & Direct optimization & 0.028 & --- \\
        \midrule
        B ($\theta=0.5$) & EM algorithm        & 1.290 & 32 \\
                         & Direct optimization & 0.025 & --- \\
        \bottomrule
    \end{tabular}
\end{table}

\Cref{tab:cputime-iter} shows that the EM algorithm is approximately $50$ times slower than direct optimization under both settings. This additional computational burden is attributable to the repeated evaluation of conditional expectations in the E-step together with the numerical maximization of the Q-function in the M-step at each iteration.

Taken together, these results suggest that, for the proposed model, the direct optimization approach is preferable to the EM algorithm. The two methods yield essentially identical parameter estimates and maximized log-likelihood values, but the EM algorithm requires substantially greater computational effort.

\bibliography{ref}